\providecommand{\U}[1]{\protect\rule{.1in}{.1in}}
\newtheorem{theorem}{Theorem}
\newtheorem{corollary}[theorem]{Corollary}
\newtheorem{lemma}[theorem]{Lemma}
\newtheorem{remark}[theorem]{Remark}
\newcommand{\bea}{\begin{eqnarray}}
\newcommand{\eea}{\end{eqnarray}}
\newcommand{\be}{\begin{equation}}
\newcommand{\ee}{\end{equation}}
\newenvironment{proof}[1][Proof]{\noindent\textbf{#1.} }{\ \rule{0.5em}{0.5em}}
\begin{document}

\title{Quantum rate distortion, reverse Shannon theorems, and source-channel separation}
\author{Nilanjana Datta, Min-Hsiu Hsieh, and Mark M.~Wilde\thanks{Nilanjana Datta and
Min-Hsiu Hsieh are with the Statistical Laboratory, University of Cambridge,
Wilberforce Road, Cambridge CB3 0WB, United Kingdom. The
contribution of M.-H.~H. was mainly done when he was with the Statistical
Laboratory, University of Cambridge. Now he is with
Centre for Quantum Computation and Intelligent Systems (QCIS), Faculty
of Engineering and Information Technology (FEIT), University of
Technology, Sydney (UTS), PO Box 123, Broadway NSW 2007, Australia.
Mark M. Wilde is with the
School of Computer Science, McGill University, Montr\'{e}al, Qu\'{e}bec,
Canada H3A 2A7.}}
\maketitle

\begin{abstract}
We derive quantum counterparts of two key theorems of classical information
theory, namely, the rate distortion theorem and the source-channel separation
theorem. The rate-distortion theorem gives the ultimate limits on lossy data
compression, and the source-channel separation theorem implies that a
two-stage protocol consisting of compression and channel coding is optimal for
transmitting a memoryless source over a memoryless channel. In spite of their
importance in the classical {domain}, there has been surprisingly little work
in these areas for quantum information theory. In the present paper, we {prove
that the quantum rate distortion function is given in terms of the}
regularized entanglement of purification. We also determine a single-letter
expression for the entanglement-assisted {quantum} rate distortion function,
and we prove that it serves as a lower bound on the unassisted quantum rate
distortion function. This implies that the unassisted quantum rate distortion
function is non-negative and generally not equal to the coherent information
between the source and distorted output (in spite of Barnum's conjecture that
the coherent information would be relevant here). {Moreover,} we prove several
quantum source-channel separation theorems. The strongest of these are in the
entanglement-assisted setting, {in which we establish a necessary and
sufficient codition for transmitting a memoryless source over a memoryless
quantum channel up to a given distortion.}

\end{abstract}


\begin{IEEEkeywords}quantum rate distortion, reverse Shannon theorem, quantum Shannon theory,
quantum data compression, source-channel separation\end{IEEEkeywords}

\section{Introduction}

Two pillars of classical information theory are Shannon's data compression
theorem and his channel capacity theorem \cite{Shannon:1948wk,book1991cover}.
The former gives a fundamental limit to the compressibility of classical
information, while the latter determines the ultimate limit on classical
communication rates over a noisy classical channel. Modern communication
systems exploit these ideas in order to make the best possible use of
communication resources.

Data compression is possible due to statistical redundancy in the information
emitted by sources, with some signals being emitted more frequently than
others. {Exploiting} this redundancy suitably allows one to compress data
without losing essential information. If the data which is recovered after the
compression-decompression process is an exact replica of the original data,
then the compression is said to be \emph{{lossless}}. {The simplest example of
an information source is a memoryless one. Such a source can be characterized
by a random variable $U$ with probability distribution $\{p_{U}(u)\}$ and each
use of the source results in a letter $u$ being emitted with probability
$p_{U}(u)$.} Shannon's {noiseless coding theorem states} that the entropy
$H\left(  U\right)  \equiv-\sum_{u}p_{U}\left(  u\right)  \log_{2}p_{U}\left(
u\right)  $ of {such an} information source is the {minimum} rate at which we
can compress {signals emitted by it} \cite{Shannon:1948wk,book1991cover}.

The requirement of a data compression scheme being lossless is often too
stringent a condition, in particular for the case of multimedia data, i.e.,
audio, video and still images or in scenarios where insufficient storage space
is available. Typically a substantial amount of data can be discarded before
the information is sufficiently degraded to be noticeable. A data compression
scheme is said to be \emph{{lossy}} when the decompressed data is not required
to be identical to the original one, but instead recovering a reasonably good
approximation of the original data is considered to be good enough.

The theory of \emph{{lossy data compression}}, which is also referred to as
\emph{{rate distortion theory}}, was developed by Shannon
\cite{Shannon:tf,B71,book1991cover}. This theory deals with the tradeoff
between the rate of data compression and the allowed distortion. Shannon
proved that, for a given {memoryless information} source and {a} distortion
measure, there is a function $R(D)$, called the \emph{{rate-distortion
function}}, such that, if the maximum allowed distortion is $D$ then the best
possible compression rate is given by $R(D)$. {He established} that this
rate-distortion function is equal to the minimum of the mutual information
$I(U;\hat{U}):= H\left(  U\right)  +H(\hat{U})-H(U, \hat{U})$ over all
possible stochastic maps $p_{\hat{U}|U}\left(  \hat{u}|u\right)  $ that meet
the distortion requirement on average:%
\begin{equation}
R(D)=\min_{p\left(  \hat{u}|u\right)  \ :\ \mathbb{E}\{d(U,\hat{U})\}\leq
D}I(U;\hat{U}). \label{eq:shannon-RD}%
\end{equation}
{In the above $d(U, \hat{U})$ denotes a suitably chosen distortion measure
between the random variable $U$ characterizing the source and the random
variable $\hat{U}$ characterizing the output of the stochastic map.}

Whenever the distortion $D=0$, the above rate-distortion function is equal to
the entropy of the source. If $D>0$, then the rate-distortion function is less
than the entropy, implying that {fewer bits are needed} to transmit the source
if we allow for some distortion in its reconstruction.

Alongside these developments, Shannon also contributed the theory of reliable
communication of classical data over classical channels
\cite{Shannon:1948wk,book1991cover}. His noisy channel coding theorem gives an
explicit expression for the capacity of a {memoryless} classical channel,
i.e., the maximum rate of reliable communication through it. A memoryless
channel $\mathcal{N}$ is one for which there is no correlation in the noise
acting on successive inputs, and it can be modelled by a stochastic map
$\mathcal{N}\equiv p_{Y|X}\left(  y|x\right)  $. Shannon proved that the
capacity of such a channel is given by
\[
C\left(  \mathcal{N}\right)  =\max_{p_{X}\left(  x\right)  }I\left(
X;Y\right)  .
\]
Any scheme for error correction typically requires the use of redundancy in
the transmitted data, so that the receiver can perfectly distinguish the
received signals from one another in the limit of many uses of the channel.

Given all of the above results, we might wonder whether it is possible to
transmit an information source $U$ reliably over a noisy channel $\mathcal{N}%
$, such that the output of the information source is recoverable with an error
probability that is asymptotically small in the limit of a large number of
outputs of the information source and uses of the noisy channel. An immediate
corollary of Shannon's noiseless and noisy channel coding theorems is that
reliable transmission of the source is possible if the entropy of the source
is smaller than the capacity of the channel:%
\begin{equation}
H\left(  U\right)  \leq C\left(  \mathcal{N}\right)  .
\label{eq:SC-sep-condition}%
\end{equation}
The scheme to demonstrate sufficiency of (\ref{eq:SC-sep-condition}) is for
the sender to take the length $n$ output of the information source, compress
it down to $nH\left(  U\right)  $ bits, and encode these $nH\left(  U\right)
$ bits into a length $n$ sequence for transmission over the channel. As long
as $H\left(  U\right)  \leq C\left(  \mathcal{N}\right)  $, Shannon's noisy
channel coding theorem guarantees that it is possible to transmit the
$nH\left(  U\right)  $ bits over the channel reliably such that the receiver
can decode them, and Shannon's noiseless coding theorem guarantees that the
decoded $nH\left(  U\right)  $ bits can be decompressed reliably as well in
order to recover the original length $n$ output of the information source (all
of this is in the limit as $n\rightarrow\infty$). Given that the condition in
(\ref{eq:SC-sep-condition}) is sufficient for reliable communication of the
information source, is it also necessary? Shannon's \textit{source-channel
separation theorem} answers this question in the affirmative
\cite{Shannon:1948wk,book1991cover}.

The most important implication of the source-channel separation theorem is
that we can consider the design of compression codes and channel codes
separately---a two-stage encoding method is just as good as any other method,
whenever the source and channel are memoryless. {Thus we should consider data
compression and error correction as independent problems, and try to design
the best compression scheme and the best error correction scheme.} {The}
source-channel separation theorem guarantees that this two-stage encoding and
decoding with the best data compression and error correction codes will be optimal.

Now what if the entropy of the source is \textit{greater} than the capacity of
the channel? Our best hope in this scenario is to allow for some distortion in
the output of the source such that the rate of compression is smaller than the
entropy of the source. Recall that whenever $D>0$, the rate-distortion
function $R\left(  D\right)  $ is less than the entropy $H\left(  U\right)  $
of the source. In this case, we have a variation of the source-channel
separation theorem which states that the condition $R\left(  D\right)  \leq
C\left(  \mathcal{N}\right)  $ is both necessary and sufficient for the
reliable transmission of an information source over a noisy channel, up to
some amount of distortion $D$ \cite{book1991cover}. Thus, we can consider the
problems of lossy data compression and channel coding separately, and the
two-stage concatenation of the best lossy compression code with the best
channel code is optimal.

Considering the importance of all of the above theorems for classical
information theory, it is clear that theorems in this spirit would be just as
important for quantum information theory. {Note, however, that} in the quantum
domain, there are many different information processing tasks, depending on
which type of information we are trying to transmit and which resources are
available to assist the transmission. For example, we could transmit classical
or quantum data over a quantum channel, and such a transmission might be
assisted by entanglement shared between sender and receiver before
communication begins.

There have been many important advances in the above directions (some of
which are summarized in the recent text \cite{W11}). Schumacher proved the
noiseless quantum coding theorem, demonstrating that the von Neumann entropy
of a quantum information source is the ultimate limit to the compressibility
of information emitted by it~\cite{Schumacher:1995dg}. Hayashi {\it et al}.~have
also considered many ways to compress quantum information, a summary of which is
available in Ref.~\cite{Hayashi06}.

Quantum rate distortion theory, that is the theory of lossy quantum data
compression, was introduced by Barnum in 1998. He considered a symbol-wise
entanglement fidelity as a distortion measure~\cite{B00} and, with respect to
it, defined the quantum rate distortion function as the minimum rate of data
compression, for any given distortion. He derived a lower bound on the quantum
rate distortion function, in terms of well-known entropic quantity, namely the
coherent information. The latter can be viewed as one quantum analogue of
mutual information, since it is known to characterize the quantum capacity of
a channel~\cite{PhysRevA.55.1613, capacity2002shor, ieee2005dev}, just as the
mutual information characterizes the capacity of a classical channel. It is
this analogy, and the fact that the classical rate distortion function is
given in terms of the mutual information, that led Barnum to consider the
coherent information as a candidate for the rate distortion function in the
quantum realm. He also conjectured that this lower bound would be achievable.

Since Barnum's paper, there have been a few papers in which the problem of
quantum rate distortion has either been addressed \cite{Devetak:2002it,CW08},
or mentioned in other contexts \cite{W02,HJW02,LD09,L09}. However, not much
progress has been made in proving or disproving his conjecture. In fact, in
the absence of a matching upper bound, it is even unclear how good Barnum's
bound is, given that the coherent information can be negative, as was pointed
out in~\cite{Devetak:2002it,CW08}.

There are also a plethora of results on information transmission over quantum
channels. Holevo \cite{Hol98}, Schumacher, and Westmoreland
\cite{PhysRevA.56.131}\ provided a characterization of the classical capacity
of a quantum channel. Lloyd \cite{PhysRevA.55.1613}, Shor
\cite{capacity2002shor}, and Devetak \cite{ieee2005dev}\ proved that the
coherent information of a quantum channel is an achievable rate for quantum
communication over that channel, building on prior work of Nielsen and
coworkers \cite{SN96,PhysRevA.54.2614,BNS98,BKN98}\ who showed that its
regularization is an upper bound on the quantum capacity (note that the
coherent information of a quantum channel is always non-negative because it
involves a maximization over all inputs to the channel). Bennett \textit{et
al}.~proved that the mutual information of a quantum channel is equal to its
entanglement-assisted classical capacity \cite{ieee2002bennett}\ (the capacity
whenever the sender and receiver are given a large amount of shared
entanglement before communication begins).

In Ref.~\cite{ieee2002bennett}, the authors also introduced the idea of a
reverse Shannon theorem, in which a sender and receiver simulate a noisy
channel with as few noiseless resources as possible (later papers rigorously
proved several quantum reverse Shannon
theorems~\cite{ADHW06FQSW,BCR09,BDHSW09}). Although such a task might
initially seem unmotivated, they used a particular reverse Shannon theorem to
establish a strong converse for the entanglement-assisted classical
capacity.\footnote{A strong converse demonstrates that the error probability
asymptotically approaches one if the rate of communication is larger than
capacity. This is in contrast to a weak converse, which only demonstrates that
the error probability is bounded away from zero under the same conditions.}
Interestingly, the reverse Shannon theorems can also find application in rate
distortion theory \cite{W02,HJW02,LD09,L09}, and as such, they are relevant
for our purposes here.

In this paper, we prove several important quantum rate distortion theorems and
quantum source-channel separation theorems. Our first result in quantum rate
distortion is a complete characterization of the rate distortion function in
an entanglement-assisted setting.\footnote{One might consider these
entanglement-assisted rate distortion results to be part of the
\textquotedblleft quantum reverse Shannon theorem folklore,\textquotedblright%
\ but Ref.~\cite{BDHSW09} does not specifically discuss this topic.} This
result really only makes sense in the communication paradigm (and \textit{not}
in a storage setting), where we give the sender and receiver shared
entanglement before communication begins, in addition to the uses of the
noiseless qubit channel. The idea here is for a sender to exploit the shared
entanglement and a minimal amount of classical or quantum communication in
order for the receiver to recover the output of the quantum information source
up to some distortion. Our main result is a single-letter formula for the
entanglement-assisted rate distortion function, expressed in terms of a
minimization of the input-output mutual information over all {quantum
operations} that meet the distortion constraint. This result implies that the
computation of the entanglement-assisted rate distortion function for
\textit{any} quantum information source is a tractable convex optimization
{program}. It is often the case in quantum Shannon theory that the
entanglement-assisted formulas end up being formally analogous to Shannon's
classical formulas~\cite{ieee2002bennett,DHL10}, and our result here is no
exception to this trend.

We next consider perhaps the most natural setting for quantum rate distortion
in which a compressor tries to compress a quantum information source so that a
decompressor can recover it up to some distortion $D$ (this setting is the
same as Barnum's in Ref.~\cite{B00}). This setting is most natural whenever
sufficient quantum storage is not available, but we can equivalently phrase it
in a communication paradigm, where a sender has access to many uses of a
noiseless qubit channel and would like to minimize the use of this resource
while transmitting a quantum information source up to some distortion. We
prove that the {quantum rate distortion function is given in terms of a}
regularized entanglement of purification~\cite{THLD02} {in this case}. In
spite of our characterization being an intractable, regularized formula, our
result at the very least shows that the quantum rate distortion function is
always non-negative, demonstrating that Barnum's conjecture from
Ref.~\cite{B00} does not hold since his proposed rate-distortion function can
become negative. Furthermore, we prove that the entanglement-assisted quantum
rate distortion function is a single-letter lower bound on the unassisted
quantum rate distortion function (one might suspect that this should hold
because additional resources such as shared entanglement should only be able
to improve compression rates). This bound implies that the coherent
information between the source and distorted output is not relevant for
unassisted quantum rate distortion, in spite of Barnum's conjecture that it
would be.

We finally prove {three} source-channel separation theorems that apply to the
transmission of a classical source over a quantum channel, the transmission of
a quantum source over a quantum channel, and the transmission of a quantum
source over an entanglement-assisted quantum channel, {respectively}. The
first two source-channel separation theorems are single-letter, {in the sense
that they do not involve any regularised quantities,} whenever the Holevo
{capacity} or the coherent information of the channel are additive,
respectively. The third theorem is single-letter in all cases because the
entanglement-assisted quantum capacity is given by a single-letter expression
for all quantum channels \cite{PhysRevA.56.3470,ieee2002bennett}. We also
prove a related set of source-channel separation theorems that allow for some
distortion in the reconstruction of the output of the information source.
{From these theorems we infer that} it is best to search for the best quantum
data compression protocols \cite{BFGL00,BF02,BHL06,BCH07,MRN09,PB10}, the best
quantum error-correcting codes
\cite{PhysRevA.52.R2493,PhysRevA.54.1098,ieee1998calderbank,MMM04,PTO09,KHIS10}%
, and the best entanglement-assisted quantum error-correcting codes
\cite{BDH06,HBD09,HYH11,WH10} independently of each other whenever the source
and channel are memoryless. The theorems then guarantee that combining these
protocols in a two-stage encoding and decoding is optimal.

We structure this paper as follows. We first {overview} relevant notation and
definitions in the next section. Section~\ref{sec:qrd}\ introduces the
information processing task relevant for quantum rate distortion and then
presents all of our quantum rate distortion results in detail.
Section~\ref{sec:source-channel-sep}\ presents our various quantum
source-channel separation theorems for memoryless sources and channels.
Finally, we conclude in Section~\ref{sec:concl}\ and discuss important open questions.

\section{Notation and Definitions}

Let $\mathcal{H}$ denote a finite-dimensional Hilbert space and let
$\mathcal{D}(\mathcal{H})$ denote the set of density matrices or
\emph{{states}} (i.e., positive operators of unit trace) acting on
$\mathcal{H}$. Let $\rho_{A}\in\mathcal{D}(\mathcal{H}_{A})$ denote the state
characterizing a memoryless quantum information source, the subscript $A$
being used to denote the underlying quantum system. {We refer to it as the
\emph{{source state}}.} Let $|\psi_{RA}^{\rho}\rangle\in\mathcal{H}_{R}%
\otimes\mathcal{H}_{A}$ denote its \emph{{purification}}, that is,%
\[
\psi_{RA}^{\rho}=|\psi_{RA}^{\rho}\rangle\!\langle\psi_{RA}^{\rho}|
\]
is a pure state density matrix of a larger composite system $RA$, such that
its restriction on the system $A$ is given by $\rho_{A}$, i.e. $\rho_{A}%
:=\ $Tr$_{R}\psi_{RA}^{\rho}$, with Tr$_{R}$ denoting the partial trace over
the Hilbert space $\mathcal{H}_{R}$ of a purifying reference system $R$. The
pure state $|\psi_{RA}^{\rho}\rangle$ is entangled if $\rho$ is a mixed state.
The von Neumann entropy of $\rho_{A}$, and hence of the source, is defined as%
\begin{equation}
H(A)_{\rho}\equiv-\text{Tr}\left\{  \rho\log\rho\right\}
.\label{source-entropy}%
\end{equation}
The quantum mutual information of a bipartite state $\omega_{AB}$ is defined
as%
\[
I\left(  A;B\right)  _{\omega}\equiv H\left(  A\right)  _{\omega}+H\left(
B\right)  _{\omega}-H\left(  AB\right)  _{\omega}.
\]
The coherent information $I({A\rangle B})_{\sigma}$ of a bipartite state
$\sigma_{AB}$ is defined as follows:
\begin{equation}
I({A\rangle B})_{\sigma}:=H(B)_{\sigma}-H(AB)_{\sigma}.\label{coh}%
\end{equation}

In quantum information theory, the most general mathematical description of
any allowed physical operation is given by a completely positive
trace-preserving (CPTP) map, which is a map between states. We let id$_{A}$
denote the trivial (or identity) CPTP map which keeps the state of a quantum
system $A$ unchanged, and we let $\mathcal{N}\equiv\mathcal{N}^{A\rightarrow
B}$ denote the CPTP map
\[
\mathcal{N}^{A\rightarrow B}:\mathcal{D}(\mathcal{H}_{A})\mapsto
\mathcal{D}(\mathcal{H}_{B}).
\]

The entanglement of purification of a bipartite state $\omega_{AB}$ is a
measure of correlations~\cite{THLD02}, having an operational
interpretation as the entanglement cost of creating $\omega_{AB}$
asymptotically from ebits, while consuming a negligible amount of classical
communication. It is equivalent to the following expression:%
\[
E_{p}\left(  \omega_{AB}\right)  \equiv\min_{\mathcal{N}_{E}}H\left(
(\text{id}_{B}\otimes\mathcal{N}_{E})(\mu_{BE}(\omega))\right)  ,
\]
where $\mu_{BE}(\omega)=\ $Tr$_{A}\{\phi_{ABE}^{\omega}\}$, $\phi
_{ABE}^{\omega}$ is some purification of $\omega_{AB}$, and the minimization
is over all CPTP\ maps $\mathcal{N}_{E}$ acting on the system $E$. (The
original definition in Ref.~\cite{THLD02} is different from the above, but one
can check that the definition given here is equivalent to the one given there.)

{In this paper we make use of resource inequalities }(see e.g., \cite{DHW08}%
){, to express information-processing tasks as inter-conversions between
resources.} {Let }$\left[  c\rightarrow c\right]  $ denote one forward use of
a noiseless classical bit channel, $\left[  q\rightarrow q\right]  $ one
forward use of a noiseless qubit channel, and $\left[  qq\right]  $ one ebit
of shared entanglement (a Bell state). A simple example of a resource
inequality is entanglement distribution:%
\[
\left[  q\rightarrow q\right]  \geq\left[  qq\right]  ,
\]
meaning that Alice can consume one noiseless qubit channel in order to
generate one ebit between her and Bob. Teleportation is a more interesting way
in which all three resources interact \cite{PhysRevLett.70.1895}%
\[
2\left[  c\rightarrow c\right]  +\left[  qq\right]  \geq\left[  q\rightarrow
q\right]  .
\]
The above resource inequalities are finite and exact, but we can also express
quantum Shannon theoretic protocols as resource inequalities. For example, the
resource inequality for the protocol achieving the entanglement-assisted
classical capacity of a quantum channel is as follows:%
\[
\left\langle \mathcal{N}\right\rangle +H\left(  A\right)  \left[  qq\right]
\geq I\left(  A;B\right)  \left[  c\rightarrow c\right]  .
\]
The meaning of the above resource inequality is that there exists a protocol
exploiting $n$ uses of {a memoryless} quantum channel $\mathcal{N}$ and
$nH\left(  A\right)  $ ebits in order to {transmit} $nI\left(  A;B\right)  $
classical bits from sender to receiver. The resource inequality becomes exact
in the asymptotic limit $n\rightarrow\infty$ because it is possible to show
that the error probability of decoding these classical bits correctly
approaches zero as $n\rightarrow\infty$ \cite{ieee2002bennett}.

\section{Quantum Rate-Distortion}

\label{sec:qrd}

\subsection{The Information Processing Task}

The objective of any quantum rate distortion protocol is to compress a quantum
information source such that the decompressor can reconstruct the original
state up to some distortion. Like Barnum \cite{B00}, we consider the following
distortion measure $d(\rho,\mathcal{N})$ for a state $\rho_{A}\in
\mathcal{D}(\mathcal{H}_{A})$ with purification $|\psi_{RA}^{\rho}\rangle$ and
a quantum operation $\mathcal{N}\equiv\mathcal{N}^{A\rightarrow B}$:%
\begin{equation}
d(\rho,\mathcal{N})=1-F_{e}(\rho,\mathcal{N}),
\end{equation}
where $F_{e}$ is the entanglement fidelity of the map $\mathcal{N}$:%
\begin{equation}
F_{e}(\rho,\mathcal{N})\equiv\langle\psi_{RA}^{\rho}|({\mathrm{{id}}}%
_{R}\otimes\mathcal{N}^{A\rightarrow B})(\psi_{RA}^{\rho})|\psi_{RA}^{\rho
}\rangle. \label{fid1}%
\end{equation}
The entanglement fidelity is not only {a natural} distortion measure, but it
also possesses several analytical properties which prove useful {in our
analysis}.

The state $\rho^{n}:=(\rho_{A})^{\otimes n}\in\mathcal{D}(\mathcal{H}%
_{A}^{\otimes n})$ characterizes $n$ successive outputs of a memoryless
quantum information source. A source coding (or compression-decompression)
scheme of rate $R$ is defined by a block code, which consists of two quantum
operations---the encoding and decoding maps. The encoding $\mathcal{E}_{n}$ is
a map from $n$ copies of the source space to a subspace ${\widetilde
{\mathcal{H}}_{Q^{n}}}\subset\mathcal{H}_{A}^{\otimes n}$ of dimension$~2^{nR}%
$:%
\[
\mathcal{E}_{n}:\mathcal{D}(\mathcal{H}_{A}^{\otimes n})\rightarrow
\mathcal{D}({\widetilde{\mathcal{H}}_{Q^{n}}}),
\]
and the decoding $\mathcal{D}_{n}$ is a map from the compressed subspace to an
output Hilbert space $\mathcal{H}_{A}^{\otimes n}$:%
\[
\mathcal{D}_{n}:\mathcal{D}({\widetilde{\mathcal{H}}_{Q^{n}}})\rightarrow
\mathcal{D}(\mathcal{H}_{A}^{\otimes n}).
\]
The average distortion resulting from this compression-decompression scheme is
defined as \cite{B00}:%
\[
{\overline{d}}(\rho,\mathcal{D}_{n}\circ\mathcal{E}_{n})\equiv\sum_{i=1}%
^{n}\frac{1}{n}d(\rho_{,}\mathcal{F}_{n}^{(i)}),
\]
where $\mathcal{F}_{n}^{(i)}$ is the \textquotedblleft marginal
operation\textquotedblright\ on the $i$-th copy of the source space induced by
the overall operation $\mathcal{F}_{n}\equiv\mathcal{D}_{n}\circ
\mathcal{E}_{n}$, and is defined as%
\begin{equation}
\label{fni}\mathcal{F}_{n}^{(i)}(\rho)\equiv\mathrm{{Tr}}_{A_{1},A_{2}%
,\cdots,A_{i-1},A_{i+1},\cdots,A_{n}}[\mathcal{F}_{n}(\rho^{\otimes n})].
\end{equation}
The quantum operations $\mathcal{D}_{n}$ and $\mathcal{E}_{n}$ define an
$(n,R)$ quantum rate distortion code.

For any $R,D\geq0$, the pair $(R,D)$ is said to be an \emph{achievable} rate
distortion pair if there exists a sequence of $(n,R)$ quantum rate distortion
codes $(\mathcal{E}_{n},\mathcal{D}_{n})$ such that%
\begin{equation}
\label{avg_dist}\lim_{n\rightarrow\infty}{\overline{d}}(\rho,\mathcal{D}%
_{n}\circ\mathcal{E}_{n})\leq D.
\end{equation}
The \emph{{quantum rate distortion function}} is then defined as
\[
R^{q}(D)=\inf\{R:(R,D)\,\text{is achievable}\}.
\]
{In the communication model, }if the sender and receiver have unlimited prior
shared entanglement at their disposal, then the corresponding quantum rate
distortion function is denoted as $R_{\text{eac}}^{q}(D)$ or $R_{\text{eaq}%
}^{q}(D)$, depending on whether the noiseless channel between the sender and
the receiver is classical or quantum. Figure~\ref{fig:QRD}\ depicts the most
general protocols for unassisted and assisted quantum rate distortion coding.
\begin{figure}[ptb]
\begin{center}
\includegraphics[
natheight=9.699700in,
natwidth=6.726500in,
height=4.6596in,
width=3.2396in
]
{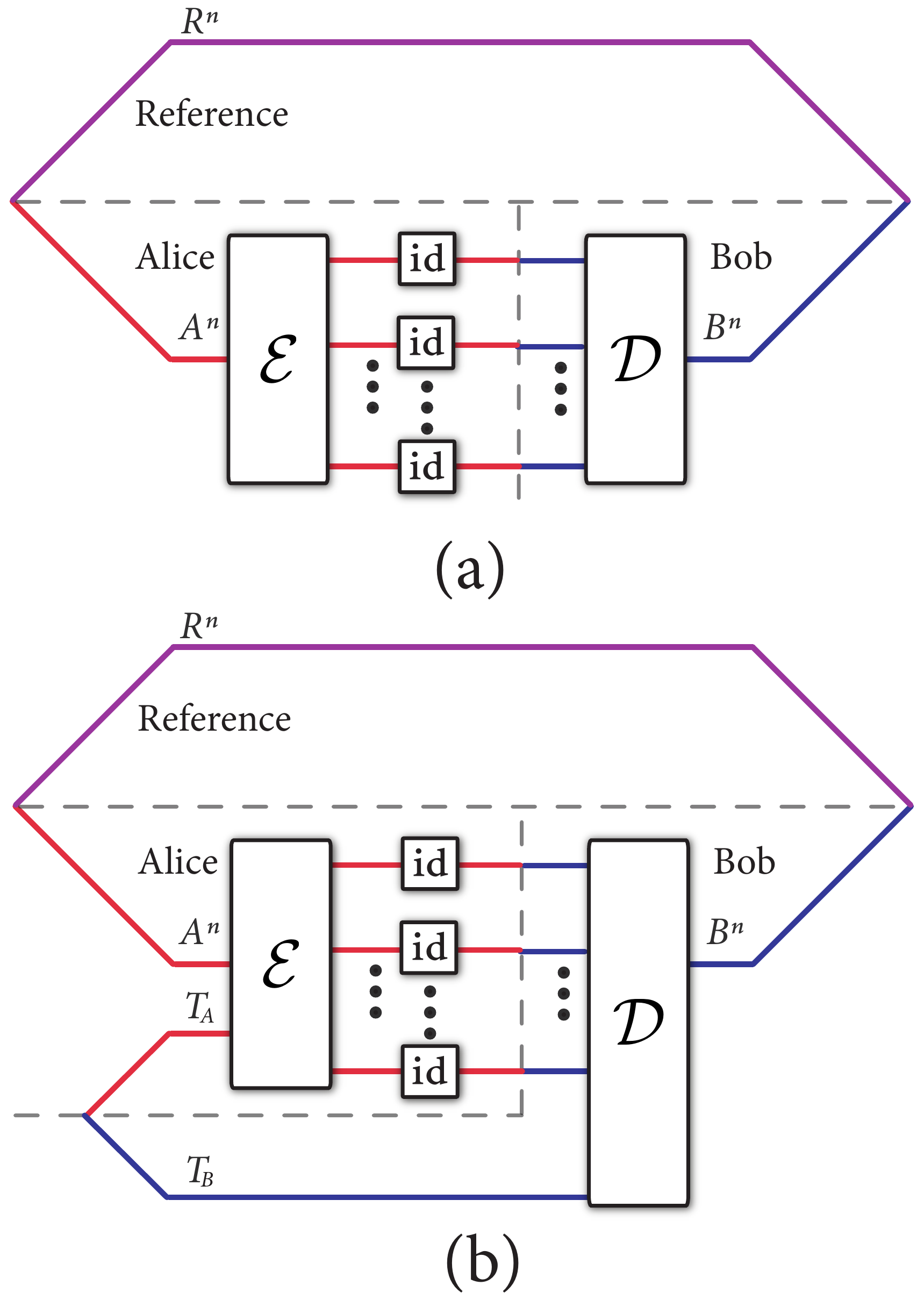}
\end{center}
\caption{The most general protocols for (a) unassisted and (b) assisted
quantum rate distortion coding. In (a), Alice acts on the tensor power output
of the quantum information source with a compression encoding $\mathcal{E}$.
She sends the compressed qubits over noiseless quantum channels (labeled by
\textquotedblleft id\textquotedblright) to Bob, who then performs a
decompression map $\mathcal{D}$ to recover the quantum data that Alice sent.
In (b), the task is similar, though this time we assume that Alice and Bob
share entanglement before communication begins.}%
\label{fig:QRD}%
\end{figure}

\subsection{Reverse Shannon Theorems and Quantum Rate-Distortion Coding}

Before we begin with our main results, we first prove Lemma~\ref{lemeps}
below. This lemma is similar in spirit to Lemma~26 of Ref.~\cite{L09} and
Theorem~19 of Ref.~\cite{W02}, {and like them, it shows that to generate a
rate-distortion code, it suffices to simulate the action of a noisy channel on
a source state such that the resulting output state meets the desired
distortion criterion. Unlike them, however, it is specifically tailored to the
entanglement fidelity distortion measure.}

\begin{lemma}
\label{lemeps} Fix $\varepsilon>0$ and $0\leq D<1$. Consider a state $\rho
_{A}$ with purification $|\psi_{RA}^{\rho}\rangle$ and a quantum channel
$\mathcal{N}\equiv\mathcal{N}^{A\rightarrow B}$ for which $d(\rho
,\mathcal{N})\leq D$. Let%
\[
\omega_{RB}:=\left(  \mathrm{{id}}\otimes\mathcal{N}\right)  \psi_{RA}^{\rho}.
\]
Furthermore, let $\{\mathcal{F}_{n}\}_{n}$ denote a sequence of quantum
operations such that for $n$ large enough,
\begin{equation}
\left\Vert \sigma_{R^{n}B^{n}}-\omega_{RB}^{\otimes n}\right\Vert _{1}%
\leq\varepsilon,\label{star}%
\end{equation}
where
\[
\sigma_{R^{n}B^{n}}:=\left(  \mathrm{{id}}_{{R}^{n}}\otimes\mathcal{F}%
_{n}\right)  \left(  (\psi_{RA}^{\rho})^{\otimes n}\right)  .
\]
Then for $n$ large enough, the average distortion under the quantum operation
$\mathcal{F}_{n}$ satisfies the bound
\[
\overline{d}(\rho,\mathcal{F}_{n})\leq D+\varepsilon,
\]

\end{lemma}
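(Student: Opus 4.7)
The plan is to reduce the claim to a bound on the per-site entanglement fidelity, which in turn follows from monotonicity of the trace distance under partial trace together with the fact that the entanglement fidelity is computed against a rank-one projector.

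First I would compute the marginal of $\sigma_{R^nB^n}$ on the $i$-th pair of systems $R_iB_i$. Since $\mathcal{F}_n$ does not touch the reference registers, partial tracing over the $R_j$ for $j\neq i$ in $(\psi_{RA}^\rho)^{\otimes n}$ produces $\psi_{R_iA_i}^\rho\otimes\rho_{A_{\bar i}}^{\otimes(n-1)}$, so after tracing out $B_j$ for $j\neq i$ I obtain
\[
\sigma_{R_iB_i}^{(i)}\;=\;\bigl(\mathrm{id}_{R_i}\otimes\mathcal{F}_n^{(i)}\bigr)\bigl(\psi_{R_iA_i}^\rho\bigr),
\]
with $\mathcal{F}_n^{(i)}$ the marginal map defined in \reff{fni}. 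Similarly $\omega_{RB}^{\otimes n}$ has $i$-th marginal $\omega_{RB}=(\mathrm{id}\otimes\mathcal{N})(\psi_{RA}^\rho)$. Thus the two entanglement fidelities can both be expressed as inner products with the same purification:
\[
F_e(\rho,\mathcal{F}_n^{(i)})=\tr\!\bigl[\psi_{RA}^\rho\,\sigma_{R_iB_i}^{(i)}\bigr],\qquad F_e(\rho,\mathcal{N})=\tr\!\bigl[\psi_{RA}^\rho\,\omega_{RB}\bigr].
\]

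Next, since $\psi_{RA}^\rho$ is a rank-one projector (so $\|\psi_{RA}^\rho\|_\infty=1$), the trace-norm duality (H\"older's inequality) yields
\[
\bigl|F_e(\rho,\mathcal{F}_n^{(i)})-F_e(\rho,\mathcal{N})\bigr|\;\leq\;\bigl\|\sigma_{R_iB_i}^{(i)}-\omega_{RB}\bigr\|_{1}.
\]
Applying monotonicity of the trace distance under the partial trace that extracts the $i$-th marginal, together with the hypothesis \reff{star}, gives
\[
\bigl\|\sigma_{R_iB_i}^{(i)}-\omega_{RB}\bigr\|_{1}\;\leq\;\bigl\|\sigma_{R^nB^n}-\omega_{RB}^{\otimes n}\bigr\|_{1}\;\leq\;\varepsilon,
\]
for every $i\in\{1,\ldots,n\}$ and all sufficiently large $n$.

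Combining these estimates, $d(\rho,\mathcal{F}_n^{(i)})\leq d(\rho,\mathcal{N})+\varepsilon\leq D+\varepsilon$ uniformly in $i$, and averaging over $i$ gives the advertised bound on $\overline{d}(\rho,\mathcal{F}_n)$. There is no real obstacle here beyond bookkeeping: the only subtle step is verifying that the single-copy marginal of $\sigma_{R^nB^n}$ is exactly what one would produce by feeding the purification $\psi_{RA}^\rho$ into the marginal channel $\mathcal{F}_n^{(i)}$, since this is what allows the comparison with $\omega_{RB}$ and the use of monotonicity of trace distance on a matching pair of bipartite states.
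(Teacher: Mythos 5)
Your proposal is correct and follows essentially the same route as the paper's proof: identify $\sigma_{R_iB_i}$ as the output of the marginal map $\mathcal{F}_n^{(i)}$ on the purification, invoke monotonicity of the trace norm under partial trace to get $\|\sigma_{R_iB_i}-\omega_{RB}\|_1\leq\varepsilon$, and convert this into a bound on the per-copy entanglement fidelity before averaging. The only (cosmetic) difference is that you bound $|\mathrm{Tr}[\psi_{RA}^{\rho}(\sigma_{R_iB_i}-\omega_{RB})]|$ via H\"older duality with $\|\psi_{RA}^{\rho}\|_\infty=1$, whereas the paper uses the equivalent operator inequality $\mathrm{Tr}\,P(A-B)\geq\mathrm{Tr}(A-B)_-$ for $0\leq P\leq I$.
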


\begin{proof}
Expressing $R^{n}=R_{1}R_{2}\cdots R_{n},$ and $B^{n}=B_{1}B_{2}\cdots B_{n},$
we have for any $1\leq i\leq n$,%
\begin{equation}
\sigma_{R_{i}B_{i}}=(\mathrm{{id}}_{R}\otimes\mathcal{F}_{n}^{(i)})(\psi
_{RA}^{\rho}).
\end{equation}
By monotonicity of the trace distance under partial trace, we have that%
\begin{equation}
\left\Vert \sigma_{R_{i}B_{i}}-\omega_{RB}\right\Vert _{1}\leq\left\Vert
\sigma_{R^{n}B^{n}}-\omega_{RB}^{\otimes n}\right\Vert _{1}. \label{c}%
\end{equation}
Hence, the average distortion under the quantum operation $\mathcal{F}_{n}$ is
given by%
\begin{align}
\overline{d}(\rho,\mathcal{F}_{n})  &  =\frac{1}{n}\sum_{i=1}^{n}\left(
1-F_{e}(\rho,\mathcal{F}_{n}^{(i)})\right) \nonumber\label{a}\\
&  =\frac{1}{n}\sum_{i=1}^{n}(1-\langle\psi_{RA}^{\rho}|\sigma_{R_{i}B_{i}%
}|\psi_{RA}^{\rho}\rangle).
\end{align}
Recall the following inequality from Ref.~\cite{BD06}:%
\begin{equation}
\mathrm{{Tr}}P(A-B)\geq\mathrm{{Tr}}(A-B)_{-}, \label{eq:datta-op-ineq}%
\end{equation}
where $0\leq P\leq I$ is any positive operator and $\left(  A-B\right)  _{-}$
denotes the negative spectral part of the operator $\left(  A-B\right)  $. We
then have the following inequalities:%
\begin{align}
&  \langle\psi_{RA}^{\rho}|\sigma_{R_{i}B_{i}}|\psi_{RA}^{\rho}\rangle
\nonumber\\
&  =\langle\psi_{RA}^{\rho}|\omega_{RB}|\psi_{RA}^{\rho}\rangle+\mathrm{{Tr}%
}\left(  \psi_{RA}^{\rho}(\sigma_{R_{i}B_{i}}-\omega_{RB})\right) \nonumber\\
&  \geq F_{e}(\rho,\mathcal{N})+\mathrm{{Tr}}(\sigma_{R_{i}B_{i}}-\omega
_{RB})_{-}, \label{b}%
\end{align}
where the inequality follows from (\ref{eq:datta-op-ineq}) and the definition
of entanglement fidelity:%
\[
\langle\psi_{RA}^{\rho}|\omega_{RB}|\psi_{RA}^{\rho}\rangle=F_{e}%
(\rho,\mathcal{N}).
\]
Hence, from (\ref{a}), (\ref{b}) and (\ref{c}), we have%
\begin{align}
&  \overline{d}(\rho,\mathcal{F}_{n})\nonumber\\
&  \leq\frac{1}{n}\sum_{i=1}^{n}\left[  1-F_{e}(\rho,\mathcal{N}%
)-\mathrm{{Tr}}(\sigma_{R_{i}B_{i}}-\omega_{R_{i}B_{i}})_{-}\right]
\nonumber\\
&  \leq\frac{1}{n}\sum_{i=1}^{n}\left[  1-F_{e}(\rho,\mathcal{N})+\left\Vert
\sigma_{R_{i}B_{i}}-\omega_{R_{i}B_{i}}\right\Vert _{1}\right] \nonumber\\
&  \leq d(\rho,\mathcal{N})+\left\Vert \sigma_{R^{n}B^{n}}-\omega_{R^{n}B^{n}%
}\right\Vert _{1}\nonumber\\
&  \leq D+\varepsilon,
\end{align}
which concludes the proof of the lemma.
\end{proof}

The above lemma illustrates a fundamental connection between quantum reverse
Shannon theorems and quantum rate-distortion protocols. In particular, if a
reverse Shannon theorem is available in a given context, then it immediately
leads to a rate-distortion protocol. {This is done simply by choosing the
simulated channel to be the one which, when acting on the source state, yields
an output state which meets} the distortion criterion for the desired
rate-distortion task. This is our approach in all of the quantum
rate-distortion theorems that follow, and it was also the approach in
Refs.~\cite{Devetak:2002it,W02,L09}.

{There is, however,} one caveat with the above approach. The reverse Shannon
theorems often require extra correlated resources such as shared randomness or
shared entanglement \cite{ieee2002bennett,ADHW06FQSW,BDHSW09,BCR09}, and the
demands of a reverse Shannon theorem are much more stringent than those of a
rate-distortion protocol. A reverse Shannon theorem requires the simulation of
a channel to be asymptotically exact, whereas a rate-distortion protocol only
demands that a source be reconstructed up to some average distortion
constraint. The differences in these goals can impact resulting rates {if
sufficient correlated resources are not available}~\cite{C08}.

In the entanglement-assisted setting considered in the next subsection, the
assumption is that an unlimited supply of entanglement is available, and thus
the entanglement-assisted quantum reverse Shannon theorem suffices for
producing a good entanglement-assisted rate-distortion protocol. In the
unassisted setting, no correlation is available, and exploiting the unassisted
reverse Shannon theorem leads to rates that are {possibly} larger than
necessary for the task of quantum rate distortion. Nevertheless, we still
employ this approach and discuss the ramifications further in the forthcoming subsections.

\subsection{Entanglement-Assisted Rate-Distortion Coding}

\subsubsection{Rate-Distortion with noiseless classical communication}

The quantum rate distortion function, $R_{\text{eac}}^{q}(D)$, for
entanglement-assisted lossy source coding with noiseless classical
communication, is given by the following theorem.

\begin{theorem}
\label{thm2} For a memoryless quantum information source defined by the
density matrix $\rho_{A^{\prime}}$, with a purification $|\psi_{AA^{\prime}%
}^{\rho}\rangle$, and any given distortion $0\leq D<1$, the quantum rate
distortion function for entanglement-assisted lossy source coding with
noiseless classical communication, is given by%
\begin{equation}
R_{\text{eac}}^{q}\left(  D\right)  =\min_{{\mathcal{N}}\ :\ d\left(
\rho,\mathcal{N}\right)  \leq D}I\left(  A;B\right)  _{\omega}, \label{up2}%
\end{equation}
where { $\mathcal{N} \equiv\mathcal{N}^{A^{\prime}\rightarrow B}$ denotes a
CPTP map,}
\[
\omega_{AB}\equiv(\mathrm{{id}}_{A}\otimes\mathcal{N}^{A^{\prime}\rightarrow
B})(\psi_{AA^{\prime}}^{\rho}),
\]
and $I\left(  A;B\right)  _{\omega}$ denotes the mutual information.
\end{theorem}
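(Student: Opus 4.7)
The plan is to prove the two directions of~\reff{up2} separately, combining Lemma~\ref{lemeps} with the entanglement-assisted quantum reverse Shannon theorem~\cite{BCR09,BDHSW09} for achievability and using a data-processing plus convexity argument for the converse. For achievability, let $\cN^{\ast} : A' \to B$ attain the minimum on the right side of~\reff{up2}, and set $\omega_{AB} = (\id_{A} \otimes \cN^{\ast})(\psi_{AA'}^{\rho})$. I would invoke the entanglement-assisted quantum reverse Shannon theorem: for every $\eps > 0$ and all sufficiently large $n$, there is an entanglement-assisted protocol consuming $n(I(A;B)_{\omega}+\eps)$ noiseless classical bits plus free shared entanglement that implements an operation $\cF_{n}$ satisfying
\[
\bigl\| (\id_{A^{n}} \otimes \cF_{n})((\psi_{AA'}^{\rho})^{\otimes n}) - \omega_{AB}^{\otimes n} \bigr\|_{1} \leq \eps.
\]
This is precisely the hypothesis of Lemma~\ref{lemeps}, which then gives $\overline{d}(\rho,\cF_{n}) \leq D + \eps$, and letting $\eps \to 0$ yields $R_{\text{eac}}^{q}(D) \leq \min_{\cN \,:\, d(\rho,\cN) \leq D} I(A;B)_{\omega}$.

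For the converse, fix an $(n,R)$ entanglement-assisted code with overall channel $\cF_{n} = \cD_{n} \circ \cE_{n}$, let $\sigma_{A^{n}B^{n}} = (\id_{A^{n}} \otimes \cF_{n})((\psi_{AA'}^{\rho})^{\otimes n})$, and let $T_{A}T_{B}$ denote the pre-shared entanglement. I would proceed in three sub-steps. First, since $A^{n}$ is initially uncorrelated with $T_{B}$ and Alice's encoding acts only on $A'^{n}T_{A}$, the marginal on $A^{n}T_{B}$ stays a product, so $I(A^{n};T_{B}) = 0$; as only an $nR$-bit classical message $M$ is transmitted, the chain rule and data processing through Bob's decoding give
\[
I(A^{n};B^{n})_{\sigma} \leq I(A^{n};MT_{B}) = I(A^{n};M \mid T_{B}) \leq H(M) \leq nR.
\]
Second, because the reference is in the product state $(\rho_{A})^{\otimes n}$, strong subadditivity yields $H(A^{n}|B^{n}) \leq \sum_{i} H(A_{i}|B_{i})$, hence the superadditivity
\[
I(A^{n};B^{n})_{\sigma} \geq \sum_{i=1}^{n} I(A_{i};B_{i})_{\sigma_{i}}, \quad \sigma_{i} = (\id_{A} \otimes \cF_{n}^{(i)})(\psi_{AA'}^{\rho}).
\]
Third, set $\overline{\cN} = \frac{1}{n}\sum_{i=1}^{n} \cF_{n}^{(i)}$; linearity of entanglement fidelity in the channel gives $d(\rho,\overline{\cN}) = \overline{d}(\rho,\cF_{n})$, and a short Stinespring-dilation calculation shows that $I(A;B)_{\omega}$ is convex in $\cN$ for fixed input. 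Combining,
\[
R \,\geq\, \frac{1}{n}\sum_{i=1}^{n} I(A_{i};B_{i})_{\sigma_{i}} \,\geq\, I(A;B)_{\overline{\omega}} \,\geq\, \min_{\cN \,:\, d(\rho,\cN) \leq \overline{d}(\rho,\cF_{n})} I(A;B)_{\omega},
\]
and taking $n \to \infty$ with $\overline{d}(\rho,\cF_{n}) \to D$, together with continuity of the right-hand minimum in $D$, yields the converse.

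The most delicate step is citing the entanglement-assisted quantum reverse Shannon theorem in a form strong enough for Lemma~\ref{lemeps}, namely with asymptotic trace-norm closeness of the simulated bipartite output on the purification of the source, rather than just symbol-wise closeness. The remaining ingredients---the classical-communication bound $I(A^{n};B^{n}) \leq nR$, superadditivity of mutual information on product references, convexity of $I(A;B)$ in the channel, and continuity of the feasible minimum in $D$---are standard but must be assembled in the right order so that the time-sharing step produces a single-letter channel $\overline{\cN}$ to which the distortion constraint applies.
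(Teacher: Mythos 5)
Your proposal is correct and follows essentially the same route as the paper: achievability via the entanglement-assisted quantum reverse Shannon theorem combined with Lemma~\ref{lemeps}, and a converse built from the classical-message bound $I(R^{n};B^{n})\leq nR$, superadditivity of mutual information over a product reference, and a time-sharing/convexity step. The only cosmetic difference is that you apply convexity of $I(A;B)$ in the channel directly to the averaged channel $\overline{\mathcal{N}}$, whereas the paper packages the same argument as convexity and monotonicity of the information rate-distortion function (its Lemmas~\ref{lem:MI-convex} and~\ref{lem:convex-EAC-RD}).
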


\begin{proof}
We first prove the converse (optimality). Consider the most general protocol
for entanglement-assisted {lossy source coding} that acts on many copies
($\rho^{\otimes n}$) of the state $\rho\in\mathcal{D}(\mathcal{H}_{A})$
(depicted in Figure~\ref{fig:QRD}(b)). We take a purification of $\rho$ as
$|\psi_{RA}^{\rho}\rangle$. Let $\Phi_{T_{A}T_{B}}$ denote an entangled state,
with the system $T_{A}$ being with Alice and the system $T_{B}$ being with
Bob. Alice then acts on the state $\rho^{\otimes n}$ and her share $T_{A}$ of
the {entangled state} with a compression map ${\mathcal{E}}_{n}\equiv
\mathcal{E}^{A^{n}T_{A}\rightarrow W}$, where $W$ is a classical system of
size $\approx2^{nr}$, with $r$ being the rate of compression (in
Figure~\ref{fig:QRD}(b), $W$ corresponds to the outputs of the {noiseless
quantum} channels). Then {Bob acts on both the classical system $W$ that he
receives and his share $T_{B}$ of the entangled state with the decoding map
$\mathcal{D}_{n}\equiv\mathcal{D}^{WT_{B}\rightarrow B^{n}}$.} The final state
should be such that it is distorted by at most $D$ according to the {average}
distortion criterion {in the limit $n\rightarrow\infty$} (\ref{avg_dist}).
With these steps in mind, consider the following chain of inequalities:%
\begin{align*}
nr &  \geq H\left(  W\right)  \\
&  \geq H\left(  W|T_{B}\right)  \\
&  \geq H\left(  W|T_{B}\right)  -H\left(  W|R^{n}T_{B}\right)  \\
&  =I\left(  W;R^{n}|T_{B}\right)  \\
&  =I\left(  W;R^{n}|T_{B}\right)  +I\left(  R^{n};T_{B}\right)  \\
&  =I\left(  WT_{B};R^{n}\right)  \\
&  \geq I\left(  B^{n};R^{n}\right).
\end{align*}
The first inequality follows because the entropy $nr$ of the uniform
distribution is the largest that the entropy $H\left(  W\right)  $ can be. The
second inequality follows because conditioning cannot increase entropy. The
third inequality follows because $H\left(  W|R^{n}T_{B}\right)  \geq0$ from
the assumption that $W$ is classical. The first equality follows from the
definition of mutual information, and the second equality follows from the
fact that $R^{n}$ and $T_{B}$ are in a product state. The third equality is
the chain rule for quantum mutual information. The final inequality is from
quantum data processing. Continuing, we have%
\begin{align}
&  \geq\sum_{i=1}^{n}I\left(  B_{i};R_{i}\right)  \nonumber\\
&  \geq\sum_{i=1}^{n}R_{\text{eac}}^{q}\left(  d\left(  \rho,\mathcal{F}%
_{n}^{(i)}\right)  \right)  \nonumber\\
&  =n\sum_{i=1}^{n}\frac{1}{n}R_{\text{eac}}^{q}\left(  d\left(
\rho,\mathcal{F}_{n}^{(i)}\right)  \right)  \nonumber\\
&  \geq nR_{\text{eac}}^{q}\left(  \sum_{i=1}^{n}\frac{1}{n}d\left(
\rho,\mathcal{F}_{n}^{(i)}\right)  \right)  \nonumber\\
&  \geq nR_{\text{eac}}^{q}\left(  D\right).  \label{eq:EAC-RD-2nd-block}%
\end{align}
In the above, $\mathcal{F}_{n}^{(i)}$ is the marginal operation on the $i$-th
copy of the source space induced by the overall operation $\mathcal{F}%
_{n}\equiv\mathcal{D}_{n}\circ\mathcal{E}_{n}$, and is given by (\ref{fni}).
The first inequality follows from superadditivity of quantum mutual
information (see Lemma~\ref{lem:superadd-MI} in the appendix). The second
inequality follows from the fact that the map $\mathcal{D}_{i}\circ
\mathcal{E}_{i}$ has distortion $d\left(  \rho,\mathcal{D}_{i}\circ
\mathcal{E}_{i}\right)  $ and the information rate-distortion function is the
minimum of the mutual information over all maps with this distortion. The
{last two} inequalities follow from convexity of the {quantum rate-distortion
function $R_{\text{eac}}^{q}\left(  D\right)  $,} (see
Lemma~\ref{lem:convex-EAC-RD}\ in the appendix), from the assumption that the
average distortion of the protocol is no larger than the amount allowed:%
\[
\sum_{i=1}^{n}\frac{1}{n}d\left(  \rho,\mathcal{D}_{i}\circ\mathcal{E}%
_{i}\right)  \leq D,
\]
and from the fact that {$R_{\text{eac}}^{q}\left(  D\right)  $,} is
non-increasing as a function of $D$ (see Lemma~\ref{lem:convex-EAC-RD}\ in the appendix).

The direct part of Theorem~\ref{thm2} follows from the quantum reverse Shannon
theorem, which states that it is possible to simulate (asymptotically
perfectly) the action of a quantum channel $\mathcal{N}$ on an arbitrary state
$\rho$, by exploiting noiseless classical communication and prior shared
entanglement between a sender and
receiver~\cite{ieee2002bennett,ADHW06FQSW,BDHSW09,BCR09}. The resource
inequality for this protocol is%
\begin{equation}
I\left(  A;B\right)  _{\omega}\left[  c\rightarrow c\right]  +H\left(
B\right)  _{\omega}\left[  qq\right]  \geq\left\langle \mathcal{N}%
:\rho\right\rangle ,\label{ri_eac}%
\end{equation}
where the entropies are with respect to a state of the following form:%
\[
\left\vert \omega_{ABE}\right\rangle \equiv U_{\mathcal{N}}^{A^{\prime
}\rightarrow BE}|\psi_{AA^{\prime}}^{\rho}\rangle,
\]
$|\psi_{AA^{\prime}}^{\rho}\rangle$ is a purification of $\rho$,
$U_{\mathcal{N}}^{A^{\prime}\rightarrow BE}$ is an isometric extension of the
channel $\mathcal{N}^{A^{\prime}\rightarrow B}$. {Our} protocol simply
exploits this theorem. More specifically, for a given distortion $D$, we take
$\mathcal{N}$ to be the {CPTP} map which achieves the minimum in the
{expression (\ref{up2}) of $R_{\text{eac}}^{q}(D)$.} Then we exploit classical
communication at the rate given in the resource inequality (\ref{ri_eac}) to
simulate the action of the channel $\mathcal{N}$ on the {source} state $\rho$.
For any arbitrarily small $\varepsilon>0$ and $n$ large enough, the protocol
for the quantum reverse Shannon theorem simulates the action of the channel up
to the constant $\varepsilon$ (in the sense of (\ref{star})). This allows us
to invoke Lemma~\ref{lemeps} to show that the resulting average distortion is
no larger than $D+\varepsilon$.
\end{proof}

The main reason that we can use the quantum reverse Shannon theorem as a
\textquotedblleft black box\textquotedblright\ for the purpose of quantum rate
distortion is from our assumption of unlimited shared entanglement. It is
likely that this protocol uses much more entanglement than necessary for the
purpose of entanglement-assisted quantum rate distortion coding with classical
channels, and it should be worthwhile to study the trade-off between classical
communication and entanglement consumption in more detail, as previous authors
have done in the context of channel coding \cite{Shor_CE,HW08GFP,HW09,WH10b}.
Such a study might lead to a better protocol for entanglement-assisted rate
distortion coding and might further illuminate better protocols for other
quantum rate distortion tasks.

We think that our protocol exploits more entanglement than necessary from
considering what is known in the classical case regarding reverse Shannon
theorems and rate-distortion coding \cite{book1991cover,ieee2002bennett,C08}.
First, as reviewed in (\ref{eq:shannon-RD}), the classical mutual information
minimized over all stochastic maps that meet the distortion criterion is equal
to Shannon's classical rate-distortion function~\cite{book1991cover}. Bennett
\textit{et al}.~have shown that the classical mutual information is also equal
to the minimum rate needed to simulate a classical channel whenever free
common randomness is available~\cite{ieee2002bennett}. Thus, a simple strategy
for achieving the task of rate distortion is for the parties to choose the
stochastic map that minimizes the rate distortion function and simulate it
with the classical reverse Shannon theorem. But this strategy uses far more
classical bits than necessary whenever sufficient common randomness is not
available~\cite{C08}. Meanwhile, we already know that the mutual information
is achievable without any common randomness if the goal is rate
distortion~\cite{book1991cover}.

\subsubsection{Rate-Distortion with noiseless quantum communication}

The quantum rate distortion function, $R_{\text{eaq}}^{q}(D)$, for
entanglement-assisted lossy source coding with noiseless quantum
communication, is given by the following theorem.

\begin{theorem}
\label{thm3} For a memoryless quantum information source defined by the
density matrix $\rho_{A^{\prime}}$, with a purification $|\psi_{AA^{\prime}%
}^{\rho}\rangle$, and any given distortion $0\leq D<1$, the quantum rate
distortion function for entanglement-assisted lossy source coding with
noiseless quantum communication, is given by
\begin{equation}
R_{\text{eaq}}^{q}\left(  D\right)  =\frac{1}{2}\left[  \min_{{\mathcal{N}%
}\ :\ d\left(  \rho,\mathcal{N}\right)  \leq D}I\left(  A;B\right)  _{\omega
}\right]  , \label{up3}%
\end{equation}
where $\mathcal{N}\equiv\mathcal{N}^{A^{\prime}\rightarrow B}$ denotes a CPTP
map,
\begin{equation}
\omega_{AB}\equiv(\mathrm{{id}}_{A}\otimes\mathcal{N}^{A^{\prime}\rightarrow
B})(\psi_{AA^{\prime}}^{\rho}), \label{eq:distorted-source-state}%
\end{equation}
and $I\left(  A;B\right)  _{\omega}$ denotes its mutual information.
\end{theorem}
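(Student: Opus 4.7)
The plan is to deduce Theorem~\ref{thm3} from Theorem~\ref{thm2} by exploiting the standard interconversion between classical bits and qubits that is available in the free-entanglement setting, namely super-dense coding and teleportation. Because both protocols are exact per use (once perfect ebits are granted) and because entanglement is assumed unlimited, this conversion preserves the average-distortion criterion without any extra error analysis.

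For the achievability part, I would start from the rate-distortion code whose existence is guaranteed by Theorem~\ref{thm2}: for any $\mathcal{N}$ achieving the minimum in (\ref{up3}) and any $\varepsilon>0$, there is an entanglement-assisted protocol consuming noiseless classical communication at rate $I(A;B)_\omega + \varepsilon$ and producing an output whose average distortion is at most $D+\varepsilon$. Then I would apply super-dense coding, $[q\rightarrow q]+[qq]\geq 2[c\rightarrow c]$, to each block of classical communication, so the same overall protocol can be implemented using $\tfrac{1}{2}(I(A;B)_\omega + \varepsilon)$ noiseless qubit channels per source symbol, with the extra $\tfrac{1}{2}(I(A;B)_\omega + \varepsilon)$ ebits absorbed by the unlimited entanglement resource. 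Taking $\varepsilon\rightarrow 0$ gives the upper bound $R_{\text{eaq}}^q(D)\leq \tfrac{1}{2}\min_{\mathcal{N}\,:\,d(\rho,\mathcal{N})\leq D} I(A;B)_\omega$.

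For the converse, I would run the reduction in the opposite direction using teleportation, $2[c\rightarrow c]+[qq]\geq [q\rightarrow q]$. Suppose $(\mathcal{E}_n,\mathcal{D}_n)$ is an $(n,r)$ entanglement-assisted quantum rate-distortion code with noiseless qubit channels that achieves average distortion $\leq D$ in the limit $n\rightarrow\infty$. Replacing each qubit channel use by a teleportation protocol (whose ebit consumption is paid for by the unlimited shared entanglement) produces an entanglement-assisted rate-distortion code using noiseless classical communication at rate $2r$ and achieving the same distortion. Applying the converse part of Theorem~\ref{thm2} to this converted code yields $2r\geq R_{\text{eac}}^q(D) = \min_{\mathcal{N}\,:\,d(\rho,\mathcal{N})\leq D} I(A;B)_\omega$, hence $r\geq \tfrac{1}{2}\min_{\mathcal{N}\,:\,d(\rho,\mathcal{N})\leq D} I(A;B)_\omega$, which is the desired lower bound.

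The main conceptual point to verify, rather than a hard obstacle, is that the super-dense coding and teleportation substitutions commute cleanly with the block structure and with the quantum reverse Shannon simulation used in Theorem~\ref{thm2}, so that the distortion criterion (\ref{avg_dist}) is preserved on both sides of the reduction. A direct alternative would be to repeat the converse chain of Theorem~\ref{thm2} with $W$ now a quantum system of dimension $2^{nr}$, where the classical step $H(W|R^n T_B)\geq 0$ is replaced by the quantum dimension bound $I(W;R^n T_B)\leq 2H(W)\leq 2nr$; this is the source of the factor $1/2$ and would give the same converse without appealing to teleportation, but the SD/teleportation reduction is shorter and makes the factor of two transparent as the operational ratio between qubits and classical bits in the presence of free entanglement.
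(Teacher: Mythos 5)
Your proposal is correct, but it takes a genuinely different route from the one the paper adopts. For achievability you compose Theorem~\ref{thm2} with super-dense coding, whereas the paper invokes the fully quantum reverse Shannon theorem (\ref{eq:FQRS}) directly and then applies Lemma~\ref{lemeps}; these are essentially interchangeable here because the unlimited-entanglement assumption absorbs the differing ebit costs. For the converse you teleport each qubit channel into $2\left[c\rightarrow c\right]+\left[qq\right]$ and appeal to the converse of Theorem~\ref{thm2}, while the paper runs a fresh entropy chain (\ref{eq:EAQ-dist-bound}) in which the purity of the global state on $WR^{n}T_{B}E$ gives $2H(W)=H(W)+H(R^{n}T_{B}E)\geq I(W;R^{n}T_{B}E)$ --- exactly the dimension bound you sketch as your ``direct alternative.'' The paper in fact explicitly acknowledges that Theorem~\ref{thm3} ``follows easily from Theorem~\ref{thm2} by combining with teleportation,'' but deliberately prefers the direct chain because it remains valid when $T_{B}$ is trivial, which is what yields Corollary~\ref{cor:EA-bounds-unassisted} (the single-letter lower bound $R^{q}(D)\geq R_{\text{eaq}}^{q}(D)$ on the unassisted function). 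Your reduction buys brevity and makes the factor of $1/2$ operationally transparent as the qubit-to-cbit exchange rate under free entanglement; it can also recover the corollary (teleporting an unassisted qubit code yields an entanglement-assisted classical code of rate $2r$, so $2r\geq R_{\text{eac}}^{q}(D)$), though less immediately than the paper's chain. One point you correctly flag and that does need to be said explicitly is that teleportation and super-dense coding are exact protocols, so the substitution preserves the state of $R^{n}B^{n}$ and hence the average distortion criterion (\ref{avg_dist}) without any additional error accounting.
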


\begin{proof}
We first prove the converse (optimality). The setup is similar to that in the
converse proof of Theorem~\ref{thm2}, with the exception that $W$ is now a
quantum system and we let $E$ denote the environment of the compressor.
Consider the following chain of inequalities:%
\begin{align}
2nr  &  \geq2H\left(  W\right) \nonumber\\
&  =H\left(  W\right)  +H\left(  R^{n}T_{B}E\right) \nonumber\\
&  \geq H\left(  W\right)  +H\left(  R^{n}T_{B}E\right)  -H\left(  WR^{n}%
T_{B}E\right) \nonumber\\
&  =I\left(  W;R^{n}T_{B}E\right) \nonumber\\
&  \geq I\left(  W;R^{n}T_{B}\right) \nonumber\\
&  =I\left(  WT_{B};R^{n}\right)  +I\left(  W;T_{B}\right)  -I\left(
R^{n};T_{B}\right) \nonumber\\
&  =I\left(  WT_{B};R^{n}\right)  +I\left(  W;T_{B}\right) \nonumber\\
&  \geq I\left(  WT_{B};R^{n}\right) \nonumber\\
&  \geq I\left(  B^{n};R^{n}\right).  \label{eq:EAQ-dist-bound}%
\end{align}
The first inequality is because the entropy $nr$ of the uniform distribution
is the largest that the entropy $H\left(  W\right)  $ can be. The first
equality follows from the fact that the state on systems $WR^{n}T_{B}E$ is
pure. The second inequality follows by subtracting the positive quantity
$H\left(  WR^{n}T_{B}E\right)  $. The second equality is from the definition
of quantum mutual information. The third inequality is from quantum data
processing (tracing over system $E$). The third equality is a useful identity
for quantum mutual information. The fourth equality follows from $I\left(
R^{n};T_{B}\right)  =0$ since $R^{n}$ and $T_{B}$ are in a product state. The
second-to-last inequality is from $I\left(  W;T_{B}\right)  \geq0$, and the
final inequality is from the quantum data processing inequality. The rest of
the proof proceeds as in (\ref{eq:EAC-RD-2nd-block}).

The direct part follows from a variant of the quantum reverse Shannon theorem
known as the fully quantum reverse Shannon theorem
(FQRS)~\cite{ADHW06FQSW,D06}. This theorem states that it is possible to
simulate (asymptotically perfectly) the action of a channel $\mathcal{N}$ on
an arbitrary state $\rho$, by exploiting noiseless quantum communication and
prior shared entanglement between a sender and receiver. It has the following
resource inequality:%
\begin{equation}
\frac{1}{2}I\left(  A;B\right)  _{\omega}\left[  q\rightarrow q\right]
+\frac{1}{2}I\left(  B;E\right)  _{\omega}\left[  qq\right]  \geq\left\langle
\mathcal{N}:\rho\right\rangle ,\label{eq:FQRS}%
\end{equation}
where the entropies are with respect to a state of the following form:%
\begin{equation}
\left\vert \omega_{ABE}\right\rangle \equiv U_{\mathcal{N}}^{A^{\prime
}\rightarrow BE}|\psi_{AA^{\prime}}^{\rho}\rangle,\label{eq:FQRS-code-state}%
\end{equation}
$|\psi_{AA^{\prime}}^{\rho}\rangle$ is a purification of $\rho$, and
$U_{\mathcal{N}}^{A^{\prime}\rightarrow BE}$ is an isometric extension of the
channel $\mathcal{N}^{A^{\prime}\rightarrow B}$. Our protocol exploits this
theorem as follows. For a given distortion $D$, take $\mathcal{N}$ to be the
map which realizes the minimum in the {expression (\ref{up3}) of
$R_{\text{eaq}}^{q}(D)$.} Then we exploit quantum communication at the rate
given in the resource inequality (\ref{eq:FQRS}) to simulate the action of the
channel $\mathcal{N}$ on the {source} state $\rho$. For any arbitrarily small
$\varepsilon>0$ and $n$ large enough, the protocol for the fully quantum
reverse Shannon theorem simulates the action of the channel up to the constant
$\varepsilon$ (in the sense of (\ref{star})). This allows us to invoke
Lemma~\ref{lemeps} to show that the resulting average distortion is no larger
than $D+\varepsilon$.
\end{proof}

We could have determined that the form of the entanglement-assisted quantum
rate distortion function $R_{\text{eaq}}^{q}(D)$ in Theorem~\ref{thm3}%
\ follows easily from Theorem~\ref{thm2} by combining with teleportation.
Though, the above proof serves an important alternate purpose. A careful
inspection of it reveals that the steps detailed in (\ref{eq:EAQ-dist-bound})
for bounding the quantum communication rate still hold even if the system
$T_{B}$ is trivial (in the case where there is no shared entanglement between
the sender and receiver before communication begins). Thus, we obtain as a
corollary that the entanglement-assisted quantum rate distortion function is a
single-letter lower bound on the unassisted quantum rate distortion function.
This makes sense operationally as well because the additional resource of
shared entanglement should only be able to improve a rate distortion protocol.

\begin{corollary}
\label{cor:EA-bounds-unassisted}The entanglement-assisted quantum rate
distortion function $R_{\text{eaq}}^{q}\left(  D\right)  $\ in
Theorem~\ref{thm3} bounds the unassisted quantum rate distortion function
$R^{q}\left(  D\right)  $ from below:%
\[
R^{q}\left(  D\right)  \geq R_{\text{eaq}}^{q}\left(  D\right)  .
\]

\end{corollary}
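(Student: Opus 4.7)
The plan is to recycle the converse argument of Theorem~\ref{thm3}, observing that the chain of inequalities in (\ref{eq:EAQ-dist-bound}) never genuinely requires the pre-shared entangled system $T_B$ to be nontrivial. In the unassisted setting depicted in Figure~\ref{fig:QRD}(a), Alice holds only $(\psi_{RA}^{\rho})^{\otimes n}$ and applies an encoding isometry (the Stinespring dilation of $\mathcal{E}_n$) to $A^n$, producing a compressed quantum system $W$ of dimension $2^{nr}$ together with an environment $E$. The joint state on $WR^n E$ is therefore pure, and Bob applies $\mathcal{D}_n$ to $W$ alone to obtain $B^n$.

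Setting $T_B$ to be one-dimensional in (\ref{eq:EAQ-dist-bound}), the same chain goes through almost verbatim: $2nr \geq 2H(W) = H(W) + H(R^n E)$ by purity, which in turn equals $I(W; R^n E)$ since $H(WR^n E) = 0$. Two applications of quantum data processing, first tracing out $E$ and then pushing $W$ through $\mathcal{D}_n$, yield $2nr \geq I(B^n; R^n)$. The remaining steps, namely superadditivity of quantum mutual information (Lemma~\ref{lem:superadd-MI}), applying the single-letter formula of Theorem~\ref{thm3} to each marginal operation $\mathcal{F}_n^{(i)}$, convexity of $R_{\text{eaq}}^q$ (Lemma~\ref{lem:convex-EAC-RD}), its monotonicity in $D$, and the average distortion constraint $\tfrac{1}{n}\sum_i d(\rho,\mathcal{F}_n^{(i)}) \leq D$, then run exactly as in (\ref{eq:EAC-RD-2nd-block}), with the factor of $2$ absorbed on both sides to produce $r \geq R_{\text{eaq}}^q(D)$. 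Taking the infimum over achievable rates $r$ delivers the stated bound.

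There is essentially no obstacle. The only lines of (\ref{eq:EAQ-dist-bound}) that explicitly mention $T_B$, notably the identity rewriting $I(W; R^n T_B)$ as $I(WT_B; R^n) + I(W; T_B) - I(R^n; T_B)$ together with the facts $I(W; T_B) \geq 0$ and $I(R^n; T_B) = 0$, either degenerate to tautologies or are simply dropped when $T_B$ is absent. The operational reading of the result is equally transparent: free shared entanglement can only assist the compressor, so the entanglement-assisted rate must lower bound its unassisted counterpart.
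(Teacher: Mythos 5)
Your argument is correct and is essentially identical to the paper's own justification, which likewise observes that the chain of inequalities in (\ref{eq:EAQ-dist-bound}) survives unchanged when $T_B$ is trivial, so that any unassisted $(n,r)$ code satisfies $2nr \geq I(B^n;R^n) \geq 2nR_{\text{eaq}}^{q}(D)$. Nothing further is needed.
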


The above corollary firmly asserts that the coherent information $I\left(
A\rangle B\right)  $\ of the state in (\ref{eq:distorted-source-state})\ is
not relevant for quantum rate distortion, in spite of Barnum's conjecture that
it would play a role \cite{B00}. That is, one might think that there should be
some simple fix of Barnum's conjecture, say, by conjecturing that the quantum
rate distortion function would instead be $\max\left\{  0,I\left(  A\rangle
B\right)  \right\}  $. The above lower bound asserts that this cannot be the
case because half the mutual information is never smaller than the coherent
information:%
\[
\frac{1}{2}I\left(  A;B\right)  \geq\frac{1}{2}I\left(  A;B\right)  -\frac
{1}{2}I\left(  A;E\right)  =I\left(  A\rangle B\right)  .
\]

\subsection{Unassisted Quantum Rate-Distortion Coding}

\label{sec:unassisted-QRD}The quantum rate distortion function $R^{q}(D)$ for
unassisted lossy source coding is given by the following theorem.

\begin{theorem}
\label{thm1} For a memoryless quantum information source defined by the
density matrix $\rho_{A}$, and any given distortion $0\leq D<1$, the quantum
rate distortion function is given by,%
\begin{equation}
R^{q}\left(  D\right)  =\lim_{k\rightarrow\infty}\frac{1}{k}\min_{%
\genfrac{}{}{0pt}{}{\mathcal{N}^{(k)}\mathcal{\ }:}{{d(\rho}^{\otimes
k}{,\mathcal{N}^{(k)})\leq D}}%
}\ \left[  E_{p}(\rho^{\otimes k},\mathcal{N}^{(k)})\right]  , \label{up1}%
\end{equation}
where $\mathcal{N}^{(k)}:\mathcal{D}(\mathcal{H}_{A}^{\otimes k}%
)\rightarrow\mathcal{D}(\mathcal{H}_{B}^{\otimes k})$ is a CPTP map, and
\begin{equation}
E_{p}(\rho,\mathcal{N})\equiv E_{p}(\omega_{RB}) \label{reg}%
\end{equation}
denotes the entanglement of purification, with%
\begin{equation}
\omega_{RB}\equiv(\mathrm{{id}}_{R}\otimes\mathcal{N}^{A\rightarrow B}%
)(\psi_{RA}^{\rho}). \label{eq:channel-on-source-state}%
\end{equation}

\end{theorem}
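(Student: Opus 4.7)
The plan is to prove matching lower and upper bounds on $R^q(D)$, following the same overall template as Theorems~\ref{thm2} and~\ref{thm3}: the converse via a direct entropic manipulation of the code, and achievability via a channel-simulation protocol combined with Lemma~\ref{lemeps}.

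For the converse, I would take any $(n,R)$ quantum rate distortion code $(\mathcal{E}_n,\mathcal{D}_n)$ satisfying $\overline{d}(\rho,\mathcal{D}_n\circ\mathcal{E}_n)\leq D$ and introduce Stinespring dilations $U_{\mathcal{E}}:\mathcal{H}_A^{\otimes n}\to\widetilde{\mathcal{H}}_{Q^n}\otimes\mathcal{H}_{E_A}$ for the encoder and $V_{\mathcal{D}}:\widetilde{\mathcal{H}}_{Q^n}\to\mathcal{H}_B^{\otimes n}\otimes\mathcal{H}_{E_B}$ for the decoder. The global pure state
\[
|\Phi\rangle_{R^n B^n E_A E_B}=(I_{R^n}\otimes V_{\mathcal{D}}U_{\mathcal{E}})|\psi_{RA}^{\rho}\rangle^{\otimes n}
\]
purifies the distorted output $\omega_{R^n B^n}=(\mathrm{id}_{R^n}\otimes\mathcal{F}_n)(\psi_{RA}^{\rho})^{\otimes n}$, where $\mathcal{F}_n=\mathcal{D}_n\circ\mathcal{E}_n$. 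Evaluating the definition of $E_p$ on this extension with the CPTP map $\mathrm{Tr}_{E_A}$ on the purifying system, the resulting state on $B^n E_B$ is isometric (via $V_{\mathcal{D}}$) to the reduced state $\alpha_{Q^n}$ of $(I_{R^n}\otimes U_{\mathcal{E}})|\psi_{RA}^{\rho}\rangle^{\otimes n}$ on the compressed register; its entropy is bounded by $\log\dim\widetilde{\mathcal{H}}_{Q^n}=nR$. Hence $E_p(\omega_{R^n B^n})\leq nR$. Taking the infimum over all codes meeting the distortion constraint and sending $n\to\infty$ yields the regularized lower bound. The (minor) mismatch between the average-symbol distortion $\overline{d}$ used in the definition of $R^q(D)$ and the block distortion $d(\rho^{\otimes k},\mathcal{N}^{(k)})$ appearing in~\reff{up1} is handled by a symmetrization/blocking argument that does not affect the asymptotic rate.

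For achievability, I would exploit the operational interpretation of $E_p$ as the asymptotic rate of quantum communication needed to prepare a bipartite state with only negligible classical communication assistance, as established in~\cite{THLD02}. Fix $k$ and any CPTP map $\mathcal{N}^{(k)}$ satisfying $d(\rho^{\otimes k},\mathcal{N}^{(k)})\leq D$, and treat the $m$-fold action of $\mathcal{N}^{(k)}$ on copies of $\rho^{\otimes k}$ as a channel to be simulated. In the protocol, Alice prepares locally the optimal pure extension $|\phi\rangle_{R^k\widetilde{R}B^k\widetilde{B}}$ that achieves the minimum in $E_p(\rho^{\otimes k},\mathcal{N}^{(k)})$, Schumacher-compresses the $B^k\widetilde{B}$ register, and transmits it through the noiseless qubit channel at asymptotic rate $H(B^k\widetilde{B})=E_p(\rho^{\otimes k},\mathcal{N}^{(k)})$ per copy of $\mathcal{N}^{(k)}$. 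The output joint state approximates $\omega_{R^k B^k}^{\otimes m}$ in trace distance, so Lemma~\ref{lemeps} certifies that the induced code has average distortion at most $D+\varepsilon$. Dividing by $k$ and then optimizing over $k$ and $\mathcal{N}^{(k)}$ matches the converse.

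The main obstacle will be constructing this unassisted channel-simulation protocol at exactly the EP rate. The FQRS invoked in Theorem~\ref{thm3} requires a large supply of free ebits, and converting those ebits to qubit communication at the one-to-one exchange rate inflates the total qubit cost to $\tfrac{1}{2}I(A;B)+\tfrac{1}{2}I(B;E)=H(B)$, which is strictly larger than $E_p$ in general. A genuine unassisted protocol must instead exploit the fact that Alice can prepare the entire optimal extension $|\phi\rangle$ locally from her side of $\psi_{RA}^{\rho}$ and compress the $B^k\widetilde{B}$ register via Schumacher coding, so that the qubit rate is exactly $H(B^k\widetilde{B})=E_p$ rather than $H(B)$. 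The regularization appearing in~\reff{up1} is then essentially forced, since $E_p$ is not known to be additive on tensor-product inputs and this is precisely where single-letterization breaks.
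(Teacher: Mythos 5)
Your proposal is correct and follows essentially the same route as the paper: the converse identifies the encoder/decoder environments as the purifying system of the distorted output and uses the particular choice $\mathrm{Tr}_{E_A}$ (so that $E_p(\omega_{R^nB^n})\leq H(B^nE_B)=H(W)\leq nR$), and the achievability part applies the isometric extension locally, splits the environment into a retained part and a transmitted part, Schumacher-compresses $B^kE_B$ at rate $H(B^kE_B)=E_p$, and invokes Lemma~\ref{lemeps}. Your observations about why the FQRS rate $H(B)$ overshoots $E_p$ and about the source of the regularization likewise match the paper's discussion.
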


Like its classical counterpart, lossy data compression includes lossless
compression as a special case. If the distortion $D$ is set equal to zero in
(\ref{up1}), then the state $\omega_{RB}$ becomes identical to the state
$\psi_{RA}^{\rho}$. Equivalently, the quantum operation $\mathcal{N}$ is given
by the identity map id$_{A}$. Since the entanglement of purification is
additive for tensor power states \cite{THLD02}:%
\[
E_{p}\bigl((\psi_{RA}^{\rho})^{\otimes n}\bigr)=nE_{p}(\psi_{RA}^{\rho
})=nS(\rho_{A}),
\]
we infer that, for $D=0$, $R^{q}(D)$ reduces to the von Neumann entropy of the
source, which is known to be the optimal rate for lossless quantum data
compression \cite{Schumacher:1995dg}.

To prove the achievability part of Theorem \ref{thm1}, we can simply exploit
Schumacher compression \cite{Schumacher:1995dg} (which is a special type of
reverse Shannon theorem). Alice feeds each output $A$ of the source into a
CPTP\ map ${\mathcal{N}}$ that saturates the bound in (\ref{up1}) (for now, we
do not consider the limit and set $k=1$). This leads to a state of the form in
(\ref{eq:channel-on-source-state}), to which Alice can then apply Schumacher
compression. This protocol is equivalent to the following resource inequality:%
\begin{equation}
H\left(  B\right)  _{\omega}\left[  q\rightarrow q\right]  \geq\left\langle
\mathcal{N}:\rho\right\rangle .\label{eq:naive-protocol}%
\end{equation}
We note that this is a simple form of an unassisted quantum reverse Shannon theorem.

Now, a subtle detail of the simulation idea is that we are interested in
simulating the channel $\mathcal{N}^{A\rightarrow B}$ from Alice to Bob, and
Alice can actually simulate an isometric extension $U_{\mathcal{N}%
}^{A\rightarrow BE}$ of the channel where Alice receives the system $E$ and
just traces over it.

Though, {instead of simulating $U_{\mathcal{N}}^{A\rightarrow BE}$, we could
consider Alice to simulate the isometry $U_{\mathcal{N}}^{A\rightarrow
BE_{B}E_{A}}$ locally, Schumacher compressing the subsystems $B$ and $E_{B}$
so that Bob can recover them, while the subsystem $E_{A}$ remains with Alice.
This leads to the following }protocol for unassisted simulation:%
\[
H\left(  BE_{B}\right)  _{\omega}\left[  q\rightarrow q\right]  \geq
\left\langle \mathcal{N}:\rho\right\rangle .
\]
The best protocol for unassisted channel simulation is therefore the one with
the minimum rate of quantum communication, the minimum being taken over all
possible isometries $V:E\rightarrow E_{A}E_{B}$. This rate can only be less
than the rate of quantum communication required for the original naive
protocol in (\ref{eq:naive-protocol}) since the latter is a special case in
the minimization. This is the form of the unassisted quantum reverse Shannon
theorem given in Ref.~\cite{BDHSW09} and is related to a protocol considered by Hayashi \cite{PhysRevA.73.060301}.

One could then execute the above protocol by blocking $k$ of the states
together and by having the distortion channel be of the form $\mathcal{N}%
^{(k)}:A^k \to B^{(k)}$, acting on each block of $k$ states. By letting $k$ become large, such
a protocol leads to the following rate for unassisted communication:%
\begin{equation}
Q_{\min}(\rho,\mathcal{N})=\lim_{k\rightarrow\infty}\frac{1}{k}\min
_{V:E^{\left(  k\right)  }\rightarrow E_{A}E_{B}}H\left(  B^{\left(  k\right)
}E_{B}\right)  .
\end{equation}
The above quantity is equal to the entanglement of purification of the state
$(\mathrm{{id}}_{R}\otimes\mathcal{N}^{A^{k}\rightarrow B^{\left(  k\right)
}})((\psi_{RA}^{\rho})^{\otimes k})~$\cite{PhysRevA.73.060301,BDHSW09}:%
\begin{align*}
&  \lim_{k\rightarrow\infty}\frac{1}{k}\min_{V:E^{\left(  k\right)
}\rightarrow E_{A}E_{B}}H\left(  B^{\left(  k\right)  }E_{B}\right)  \\
&  =\lim_{k\rightarrow\infty}\frac{1}{k}\min_{\Lambda^{E^{\left(  k\right)
}\rightarrow E_{B}}}H(\Lambda^{E^{\left(  k\right)  }\rightarrow E_{B}%
}((U_{\mathcal{N}}^{A^{k}\rightarrow B^{\left(  k\right)  }E^{\left(
k\right)  }}(\rho_{A}^{\otimes k}))))\\
&  =\lim_{k\rightarrow\infty}\frac{1}{k}E_{p}((\mathrm{{id}}_{R^{k}}%
\otimes\mathcal{N}^{A^{k}\rightarrow B^{\left(  k\right)  }})((\psi_{RA}%
^{\rho})^{\otimes k})).
\end{align*}

We are now in a position to prove Theorem~\ref{thm1}.

\begin{proof}
[Proof of Theorem~\ref{thm1}]Fix the map $\mathcal{N}$ such that the
minimization on the RHS of (\ref{up1}) is achieved. The quantum reverse
Shannon theorem (in this case, Schumacher compression) states that it is
possible to simulate such a channel $\mathcal{N}$ acting on $\rho$ with the
amount of quantum communication equal to $E_{p}(\omega_{RB})$. Since the
protocol simulates the channel up to some arbitrarily small positive
$\varepsilon$, the distortion is no larger than $D+\varepsilon$ by invoking
Lemma~\ref{lemeps}. This establishes that $R^{q}(D)\geq E_{p}(\omega_{RB})$.
We can have a regularization as above to obtain the expression in the
statement of the theorem.

The converse part of the theorem can be proved as follows.
Figure~\ref{fig:QRD}(a) depicts the most general protocol for unassisted
quantum rate-distortion coding. Let $E_{1}$ denote the environment of the
encoder, and let $E_{2}$ denote the environment of the decoder, while $W$
again denotes the outputs of the noiseless quantum channels labeled by
\textquotedblleft id.\textquotedblright\ For any rate distortion code
$(\mathcal{E}^{(n)},\mathcal{D}^{(n)})$ of rate $r$ satisfying $\overline
{d}(\rho,\mathcal{D}^{(n)}\circ\mathcal{E}^{(n)})\leq D$, we have%
\begin{align}
nr &  \geq H(W)\nonumber\\
&  =H(E_{2}B^{n})_{\omega}\nonumber\\
&  \geq\min_{\Lambda_{E_{1}E_{2}}}H((\mathrm{{id}}_{B^{n}}\otimes
\Lambda_{E_{1}E_{2}})(\omega_{B^{n}E_{1}E_{2}}))\nonumber\\
&  =E_{p}\left(  (\mathrm{{id}}_{R^{n}}\otimes(\mathcal{D}^{(n)}%
\circ\mathcal{E}^{(n)}))(\psi_{RA}^{\rho})^{\otimes n}\right)  \nonumber\\
&  \geq\min_{\mathcal{N}^{(n)}\ :\ {\overline{d}(\rho^{\otimes n}%
,\mathcal{N}^{(n)})\leq D}}\ E_{p}\left(  (\mathrm{{id}}_{R^{n}}%
\otimes\mathcal{N}^{(n)})(\psi_{RA}^{\rho})^{\otimes n}\right).
\end{align}
The first inequality follows because the entropy of the maximally mixed state
is larger than the entropy of any state on system $W$. The first equality
follows because the isometric extension of the decoder maps $W$ isometrically
to the systems $E_{2}$ and $B^{n}$. The second inequality follows because the
entropy minimized over all CPTP maps on systems $E_{1}$ and $E_{2}$ can only
be smaller than the entropy on $E_{2}B^{n}$ (the identity map on $E_{2}$ and
partial trace of $E_{1}$ is a CPTP\ map included in the minimization). The
second equality follows from the definition of entanglement of purification.
The third inequality follows by minimizing the entanglement of purification
over all maps that satisfy the distortion criterion (recall that we assume our
protocol satisfies this distortion criterion).
\end{proof}

Our characterization of the unassisted quantum rate distortion task is
unfortunately up to a regularization. It is likely that this regularized
formula is blurring a better quantum rate-distortion formula, as has sometimes
been the case in quantum Shannon theory \cite{YHD05MQAC}. This is due in part
to our exploitation of the unassisted reverse Shannon theorem for the task of
quantum rate distortion, and the fact that the goal of a reverse Shannon
theorem is stronger than that of a rate distortion protocol, while no
correlated resources are available\ in this particular setting (see the
previous discussion after Theorem~\ref{thm2}). It would be ideal to
demonstrate that the regularization is not necessary, but it is not clear yet
how to do so without a better way to realize unassisted quantum rate
distortion. Nevertheless, the above theorem at the very least disproves
Barnum's conjecture because we have demonstrated that the quantum rate
distortion function is always positive (due to the fact that entanglement of
purification is positive~\cite{THLD02}), whereas Barnum's rate distortion
function can become negative.\footnote{To see that Barnum's proposed
distortion function can become negative, consider the case of a maximally
mixed qubit source, whose purification is the maximally entangled Bell state.
Suppose that we allow the distortion to be as large as 3/4. Then a particular
map satisfying the distortion criterion is the completely depolarizing map
because it produces a tensor product of maximally mixed qubits, whose
entanglement fidelity with the maximally entangled state is equal to 1/4. The
coherent information of a tensor product of maximally mixed qubits is equal to
its minimum value of $-1$.} Furthermore,
Corollary~\ref{cor:EA-bounds-unassisted}\ provides a good single-letter,
non-negative lower bound on the unassisted quantum rate distortion function,
which is never smaller than Barnum's bound in terms of the coherent information.

\section{Source-Channel Separation Theorems}

\label{sec:source-channel-sep}This last section of our paper {consists of
five} important quantum source-channel separation theorems. The first two
theorems apply whenever a sender wishes to transmit a memoryless classical
source over a memoryless quantum channel, whereas the third applies {when the
information source to be transmitted is a quantum source}. {The second theorem
deals with the situation in which some distortion is allowed in the
transmission.} {All these three theorems are expressed in terms of
single-letter formulas} whenever the corresponding capacity formulas are single-letter.

The last two theorems correspond to the cases in which a quantum source {is
sent} over an entanglement-assisted quantum channel, {with and without
distortion}. The formulas in these are always single-letter, demonstrating
that it is again the entanglement-assisted formulas which are in formal
analogy with Shannon's classical formulas.

\subsection{Shannon's source-channel separation theorem for quantum channels}

Shannon's original source-channel separation theorem applies to the
transmission of a classical information source over a classical channel.
Despite the importance of this theorem, it does not take into account that the
carriers of information are essentially quantum-mechanical. So our first
theorem is a restatement of Shannon's source-channel separation theorem for
the case {in which a} classical information source {is to be }reliably
{transmitted} over a quantum channel.

Figure~\ref{fig:HSW-source-channel}\ depicts the scenario to which this first
source-channel separation theorem applies. The most general protocol for
sending the output of a classical information source over a quantum channel
consists of three steps:\ encoding, transmission, and decoding. The sender
first takes the outputs $U^{n}$ of the classical information source and
encodes them with some CPTP\ encoding map $\mathcal{E}^{U^{n}\rightarrow
A^{n}}$, where the systems $A^{n}$ are the inputs to many uses of a noisy
quantum channel $\mathcal{N}^{A\rightarrow B}$. The sender then transmits the
systems $A^{n}$ over the quantum channels, and the receiver obtains the
outputs $B^{n}$. The receiver finally performs some CPTP decoding map
$\mathcal{D}^{B^{n}\rightarrow\hat{U}^{n}}$ to recover the random variables
$\hat{U}^{n}$ (note that this decoding is effectively a POVM\ because the
output systems are classical). If the scheme is any good for transmitting the
source, then the following condition holds {f}or any given $\varepsilon>0$,
for sufficiently large $n$:%
\begin{equation}
\Pr\left\{  \hat{U}^{n}\neq U^{n}\right\}  \leq\varepsilon.
\label{eq:source-channel-criterion}%
\end{equation}
\begin{figure}[ptb]
\begin{center}
\includegraphics[
natheight=2.713800in,
natwidth=6.346000in,
height=1.4866in,
width=3.4411in
]{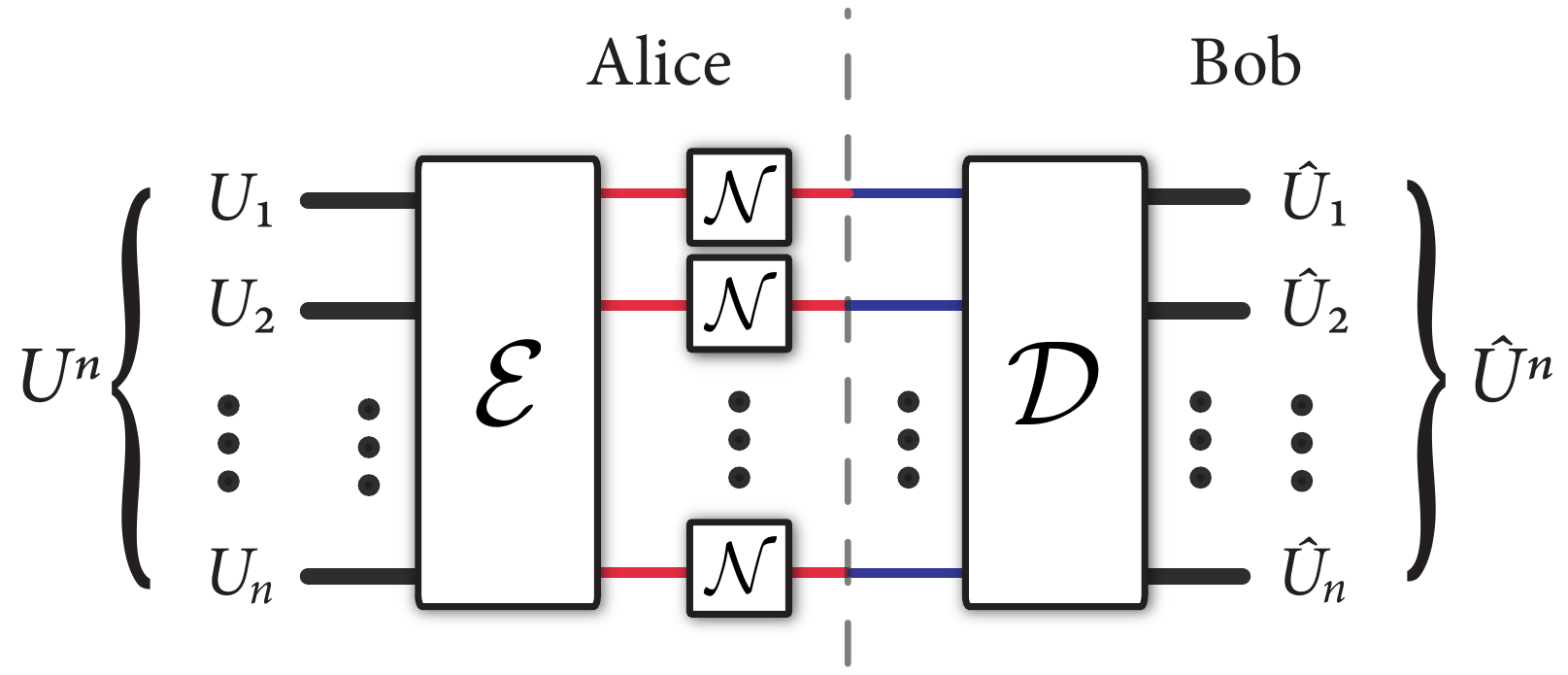}
\end{center}
\caption{The most general protocol for transmitting a classical information
source over a memoryless quantum channel.}%
\label{fig:HSW-source-channel}%
\end{figure}

\begin{theorem}
\label{thm:HSW-source-channel}The following condition is necessary and
sufficient for transmitting the output of a memoryless classical information
source, {characterized by a random variable $U$,} over a memoryless quantum
channel $\mathcal{N} \equiv\mathcal{N}^{A^{\prime}\to B}$, with additive
Holevo {capacity}:%
\begin{equation}
H\left(  U\right)  \leq\chi^{*}\left(  \mathcal{N}\right)  ,
\label{eq:HSW-source-channel}%
\end{equation}
where%
\begin{align*}
\chi^{*}\left(  \mathcal{N}\right)   &  \equiv\max_{\rho^{XA}}I\left(
X;B\right)  _{\rho},\\
\rho^{XB}  &  \equiv\sum_{x}p_{X}\left(  x\right)  \left\vert x\right\rangle
\left\langle x\right\vert ^{X}\otimes\mathcal{N}^{A\rightarrow B}(\rho_{x}%
^{A}).
\end{align*}

\end{theorem}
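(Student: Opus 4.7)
The plan is to prove sufficiency and necessity separately, following the classical template but substituting the Holevo-Schumacher-Westmoreland (HSW) theorem for Shannon's noisy channel coding theorem on the quantum-channel side.

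For sufficiency, I would use a two-stage concatenation. Given $H(U) < \chi^*(\mathcal{N})$, pick a rate $R$ with $H(U) < R < \chi^*(\mathcal{N})$. First apply Shannon's noiseless source coding theorem to compress $U^n$ into roughly $nR$ bits with vanishing error probability. Then apply the HSW theorem (with the assumed additivity of $\chi^*$, so that $\chi^*(\mathcal{N}^{\otimes n}) = n\chi^*(\mathcal{N})$, making a single-letter rate achievable) to transmit these $nR$ classical bits over $n$ uses of $\mathcal{N}$ with asymptotically vanishing error. A union bound on the two error events yields (\ref{eq:source-channel-criterion}).

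For necessity, I would argue via Fano's inequality and the Holevo bound. Suppose a sequence of codes achieves $\Pr\{\hat{U}^n \neq U^n\} \leq \varepsilon_n$ with $\varepsilon_n \to 0$. Fano's inequality gives $H(U^n \mid \hat{U}^n) \leq 1 + n \varepsilon_n \log|\mathcal{U}|$, where $|\mathcal{U}|$ is the source alphabet size. Then
\begin{align*}
nH(U) &= H(U^n) \\
&= I(U^n;\hat{U}^n) + H(U^n\mid\hat{U}^n) \\
&\leq I(U^n;B^n) + 1 + n\varepsilon_n \log|\mathcal{U}| \\
&\leq \chi^*(\mathcal{N}^{\otimes n}) + 1 + n\varepsilon_n \log|\mathcal{U}| \\
&= n\chi^*(\mathcal{N}) + 1 + n\varepsilon_n \log|\mathcal{U}|,
\end{align*}
where the first inequality uses the Holevo data processing inequality through the classical decoding, the second is the Holevo bound applied to the classical-quantum state prepared by the encoder at the channel output, and the equality uses the additivity hypothesis. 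Dividing by $n$ and letting $n \to \infty$ yields $H(U) \leq \chi^*(\mathcal{N})$.

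The main conceptual step is the Holevo bound applied at the $n$-letter level: the input to the channel is a classical-quantum state indexed by the source realization $u^n$ (after the encoding map $\mathcal{E}^{U^n \to A^n}$), and the output is the corresponding classical-quantum state on $B^n$, so $I(U^n;B^n) \leq \chi^*(\mathcal{N}^{\otimes n})$. The additivity assumption is precisely what single-letterizes this bound; without it one would only get the regularized quantity, in which case one recovers the general (non-single-letter) form of the theorem. No step should pose a real obstacle, since all the ingredients, namely Fano's inequality, the Holevo bound, Shannon compression, and HSW coding, are standard; the only care needed is in verifying that the classical decoding at the end permits the data-processing step, which it does because the decoded register $\hat{U}^n$ is obtained by a CPTP (in fact measurement) map applied to $B^n$.
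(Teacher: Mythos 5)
Your proposal is correct and follows essentially the same route as the paper: sufficiency via the two-stage Shannon-compression-then-HSW concatenation, and necessity via the chain $nH(U)=H(U^n)=I(U^n;\hat{U}^n)+H(U^n|\hat{U}^n)\leq I(U^n;B^n)+1+n\varepsilon\log|\mathcal{U}|\leq\chi^{*}(\mathcal{N}^{\otimes n})+\cdots=n\chi^{*}(\mathcal{N})+\cdots$, using Fano, data processing through the decoding measurement, the $n$-letter Holevo/$\chi^{*}$ bound, and the additivity hypothesis. The only cosmetic differences are your explicit choice of an intermediate rate $R$ with $H(U)<R<\chi^{*}(\mathcal{N})$ and your use of a vanishing sequence $\varepsilon_{n}$ in place of the paper's fixed $\varepsilon$ followed by the limit $\varepsilon\rightarrow0$.
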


\begin{proof}
Sufficiency of (\ref{eq:HSW-source-channel}) is a direct consequence of
Shannon compression and Holevo-Schumacher-Westmoreland (HSW) coding. The
sender first compresses the information source down to a set of size
$\approx2^{nH\left(  U\right)  }$. The sender then employs an HSW\ code to
transmit any message in the compressed set over $n$ uses of the quantum
channel. Reliability of the scheme follows from the assumption that $H\left(
U\right)  \leq\chi^{\ast}\left(  \mathcal{N}\right)  $, the HSW\ coding
theorem, and Shannon compression.

Necessity of (\ref{eq:HSW-source-channel}) follows from reasoning similar to
that in the proof of the classical source-channel separation theorem
\cite{book1991cover}. {Fix $\varepsilon>0$.} We begin by assuming that there
exists a good scheme that meets the criterion in
(\ref{eq:source-channel-criterion}). Consider the following chain of
inequalities:%
\begin{align}
nH\left(  U\right)   &  =H\left(  U^{n}\right)  \nonumber\\
&  =I(U^{n};\hat{U}^{n})+H(U^{n}|\hat{U}^{n})\nonumber\\
&  \leq I(U^{n};\hat{U}^{n})+1+\Pr\{\hat{U}^{n}\neq U^{n}\}n\log\left\vert
U\right\vert \nonumber\\
&  \leq I\left(  U^{n};B^{n}\right)  +1+\varepsilon n\log\left\vert
U\right\vert \nonumber\\
&  \leq\chi^{\ast}\left(  \mathcal{N}^{\otimes n}\right)  +1+\varepsilon
n\log\left\vert U\right\vert \nonumber\\
&  =n\chi^{\ast}\left(  \mathcal{N}\right)  +1+\varepsilon n\log\left\vert
U\right\vert .\label{eq:HSW-source-channel-proof}%
\end{align}
The first equality follows from the assumption that the classical information
source is memoryless. The second equality is a simple identity. The first
inequality follows from applying Fano's inequality. The second inequality
follows from the quantum data processing inequality and the assumption that
(\ref{eq:source-channel-criterion}) holds. The third inequality follows
because $I\left(  U^{n};B^{n}\right)  $ must be smaller than the maximum of
this quantity over all classical-quantum states that can serve as an input to
the tensor power channel $\mathcal{N}^{\otimes n}$. The final equality follows
from the assumption that the Holevo capacity is additive for the particular
channel $\mathcal{N}$. Thus, any protocol that reliably transmits the
information source $U$ should satisfy the following inequality%
\[
H\left(  U\right)  \leq\chi^{\ast}\left(  \mathcal{N}\right)  +\left(
1/n+\varepsilon\log\left\vert U\right\vert \right)  ,
\]
which converges to (\ref{eq:HSW-source-channel}) as $n\rightarrow\infty$ and
$\varepsilon\rightarrow0$.
\end{proof}

\begin{remark}
\label{rem:hol-reg}If the Holevo capacity is not additive for the channel,
then the best statement of the source-channel separation theorem is in terms
of the regularized quantity:%
\[
H\left(  U\right)  \leq\chi^{*}_{\text{reg}}\left(  \mathcal{N}\right)  ,
\]
where%
\[
\chi^{*}_{\text{reg}}\left(  \mathcal{N}\right)  \equiv\lim_{n\rightarrow
\infty}\frac{1}{n}\chi^{*}\left(  \mathcal{N}^{\otimes n}\right)  ,
\]
but it is unclear how useful such a statement is because we cannot compute
such a regularized quantity. (The above statement follows by applying all of
the inequalities in the proof of Theorem~\ref{thm:HSW-source-channel}\ except
the last one.)
\end{remark}

What if the condition $H\left(  U\right)  >\chi^{*}\left(  \mathcal{N}\right)
$ holds instead? We can prove a variant of the above source-channel separation
theorem that allows for the information source to be reconstructed at the
receiving end up to some distortion $D$. We obtain the following theorem:

\begin{theorem}
\label{thm:HSW-RD-SC}The following condition is necessary and sufficient for
transmitting the output of a memoryless classical information source over a
quantum channel with additive Holevo capacity (up to some distortion $D$):%
\begin{equation}
R\left(  D\right)  \leq\chi^{\ast}\left(  \mathcal{N}\right)  ,
\label{eq:HSW-source-channel-RD}%
\end{equation}
where $R\left(  D\right)  $ is defined in (\ref{eq:shannon-RD}).
\end{theorem}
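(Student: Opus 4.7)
The approach is to mirror the lossless case of Theorem \ref{thm:HSW-source-channel}, replacing Fano's inequality with the defining inequality of the classical rate-distortion function. Both directions reduce to a two-stage separation argument; the quantum channel only enters through its Holevo capacity $\chi^{\ast}(\mathcal{N})$, whose additivity on $\mathcal{N}$ allows the single-letterization. Throughout, $R(D)$ is Shannon's classical rate-distortion function (\ref{eq:shannon-RD}) and the distortion is a classical function $d(U,\hat{U})$ averaged per letter, not the entanglement fidelity used in Section \ref{sec:qrd}.

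For sufficiency I would employ a two-stage code. First apply a classical rate-distortion code that compresses $U^{n}$ to approximately $2^{nR(D)}$ bits while achieving expected distortion at most $D+\varepsilon/2$. Then, since $R(D)\leq\chi^{\ast}(\mathcal{N})$, transmit these bits reliably over $n$ uses of $\mathcal{N}$ using an HSW code. The receiver HSW-decodes and then runs the rate-distortion decompressor. Because the block error probability at the HSW layer vanishes and the per-letter distortion is bounded (bounded distortion measure), a standard error-propagation estimate gives overall expected distortion at most $D+\varepsilon$.

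For necessity, assume a scheme meeting the average distortion criterion. The chain of inequalities I would establish is
\begin{align*}
n\chi^{\ast}(\mathcal{N})
&=\chi^{\ast}(\mathcal{N}^{\otimes n})\\
&\geq I(U^{n};B^{n})\\
&\geq I(U^{n};\hat{U}^{n})\\
&\geq \sum_{i=1}^{n} I(U_{i};\hat{U}_{i})\\
&\geq \sum_{i=1}^{n} R\bigl(\mathbb{E}\{d(U_{i},\hat{U}_{i})\}\bigr)\\
&\geq n R\!\left(\tfrac{1}{n}\sum_{i=1}^{n}\mathbb{E}\{d(U_{i},\hat{U}_{i})\}\right)\\
&\geq nR(D).
\end{align*}
The respective justifications are: additivity of $\chi^{\ast}$ on $\mathcal{N}$; the Holevo bound applied to the classical-quantum state created by the encoder on input $U^{n}$ (exactly as in (\ref{eq:HSW-source-channel-proof})); data processing through the classical decoder $\mathcal{D}^{B^{n}\rightarrow\hat{U}^{n}}$; superadditivity of quantum mutual information, using independence of the source letters; the single-letter definition (\ref{eq:shannon-RD}) of $R$ applied to each $i$; convexity of $R$ (Jensen); and monotonicity of $R$ combined with the hypothesis $\tfrac{1}{n}\sum_i \mathbb{E}\{d(U_i,\hat{U}_i)\}\leq D$. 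Dividing by $n$ and taking $n\to\infty$ yields (\ref{eq:HSW-source-channel-RD}).

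The main subtlety will be the superadditivity step. Because the encoder can correlate the channel inputs (and hence the $\hat{U}_{i}$'s), a per-letter data-processing bound of the form $I(U_{i};\hat{U}_{i})\leq I(U_{i};B_{i})$ need not hold, so one must keep $B^{n}$ and $\hat{U}^{n}$ jointly and exploit only that the $U_{i}$'s are i.i.d.\ to conclude $I(U^{n};\hat{U}^{n})\geq\sum_{i} I(U_{i};\hat{U}_{i})$. In parallel with Remark \ref{rem:hol-reg}, if $\chi^{\ast}$ fails to be additive on $\mathcal{N}$ the same proof gives the statement with $\chi^{\ast}(\mathcal{N})$ replaced by its regularization, though such a statement is of limited computational use.
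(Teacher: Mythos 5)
Your proposal is correct and follows essentially the same route as the paper: sufficiency via the two-stage concatenation of a classical rate-distortion code with an HSW code, and necessity via $nR(D)\leq I(U^{n};\hat{U}^{n})$ followed by data processing, the Holevo bound, and additivity of $\chi^{\ast}$. The only difference is cosmetic---you write out the classical converse chain $I(U^{n};\hat{U}^{n})\geq\sum_{i}I(U_{i};\hat{U}_{i})\geq nR(D)$ explicitly (correctly flagging that it relies on independence of the $U_{i}$ rather than any per-letter data processing), whereas the paper simply cites (10.61--10.71) of Cover and Thomas for this step.
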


\begin{proof}
Sufficiency of (\ref{eq:HSW-source-channel-RD}) follows from the rate
distortion protocol and the HSW\ coding theorem. Specifically, the sender
compresses the information source down to a set of size $2^{nR\left(
D\right)  }$ and then uses an HSW\ code to transmit any element of this set.
The reconstructed sequence $\hat{U}^{n}$ at the receiving end obeys the
distortion constraint $\mathbb{E}\{d(U,\hat{U})\}\leq D$, {with $d(U,\hat{U})$
denoting a suitably defined distortion measure}.

Necessity of (\ref{eq:HSW-source-channel-RD}) follows from the fact that%
\begin{equation}
nR\left(  D\right)  \leq I(U^{n};\hat{U}^{n}),\label{eq:HSW-RD-SC-first-steps}%
\end{equation}
and by applying the last four steps in the chain of inequalities in
(\ref{eq:HSW-source-channel-proof}). A proof of
(\ref{eq:HSW-RD-SC-first-steps}) is available in (10.61-10.71) of
Ref.~\cite{book1991cover}.
\end{proof}

\subsection{Quantum source-channel separation theorem}

{We now prove a source-channel separation theorem which is perhaps more
interesting for quantum computing/communication applications.} Suppose that a
sender would like to transmit a quantum information source faithfully over a
quantum channel, such that the receiver perfectly recovers the transmitted
quantum source in the limit of many copies of the source and uses of the
channel. Figure~\ref{fig:quantum-SC}\ depicts the scenario to which our second
source-channel separation theorem applies.

{As before, we characterize a memoryless quantum information source by a
density matrix $\rho_{A} \in\mathcal{D}(\mathcal{H}_{A})$, and consider
$|\psi_{RA}^{\rho}\rangle\in\mathcal{H}_{R}\otimes\mathcal{H}_{A}$ denote its
\emph{{purification}}. The entropy of the source $H(A)_{\rho}$ is given by
(\ref{source-entropy}). Let $\mathcal{N}^{A^{\prime}\to B}$ denote a
memoryless quantum channel. Suppose Alice has access to multiple uses of the
source, and she and Bob are allowed multiple uses of the quantum channel.}

Since Alice needs to act on many copies of the state $\rho$, we instead
suppose that she is acting on the $A$ systems of the tensor power state
$\left\vert \psi_{RA}^{\rho}\right\rangle ^{\otimes n}$. The most general
protocol {is one in which Alice performs} some CPTP\ encoding map
$\mathcal{E}_{n} \equiv\mathcal{E}^{A^{n}\rightarrow A^{\prime n}}$ on the $A$
systems of the state $\left\vert \psi_{RA}^{\rho}\right\rangle ^{\otimes n}$,
producing some output systems $A^{\prime n}$ which can serve as input to many
uses of the quantum channel $\mathcal{N}^{A^{\prime}\rightarrow B}$. Alice
then transmits the $A^{\prime n}$ systems over the channels, leading to some
output systems $B^{n}$ for the Bob. Bob then {acts on} these systems with some
decoding map $\mathcal{D}_{n} \equiv\mathcal{D}^{B^{n}\rightarrow\hat{A}^{n}}%
$. If the protocol is any good for transmitting the quantum information
source, then the following condition should hold for {any} $\varepsilon>0$ and
sufficiently large$~n$:%
\begin{equation}
\left\Vert \left(  \psi_{RA}^{\rho}\right)  ^{\otimes n}-\mathcal{D}%
_{n}(\mathcal{N}^{\otimes n}(\mathcal{E}_{n}(\left(  \psi_{RA}^{\rho}\right)
^{\otimes n})))\right\Vert _{1}\leq\varepsilon.
\label{eq:good-SC-quantum-code}%
\end{equation}
{The relation between trace distance and entanglement fidelity~\cite{W11}
implies that
\begin{equation}
\label{fident}F_{e}(\rho^{\otimes n}, \Lambda_{n}) \ge1 - \varepsilon,
\end{equation}
where $\Lambda_{n}$ is the composite map $\Lambda_{n} \equiv\mathcal{D}_{n}
\circ\mathcal{N}^{\otimes n} \circ\mathcal{E}_{n}.$}

We can now state our first variant of a quantum source-channel separation
theorem.\begin{figure}[ptb]
\begin{center}
\includegraphics[
natheight=3.746400in,
natwidth=6.706600in,
height=1.9337in,
width=3.4411in
]{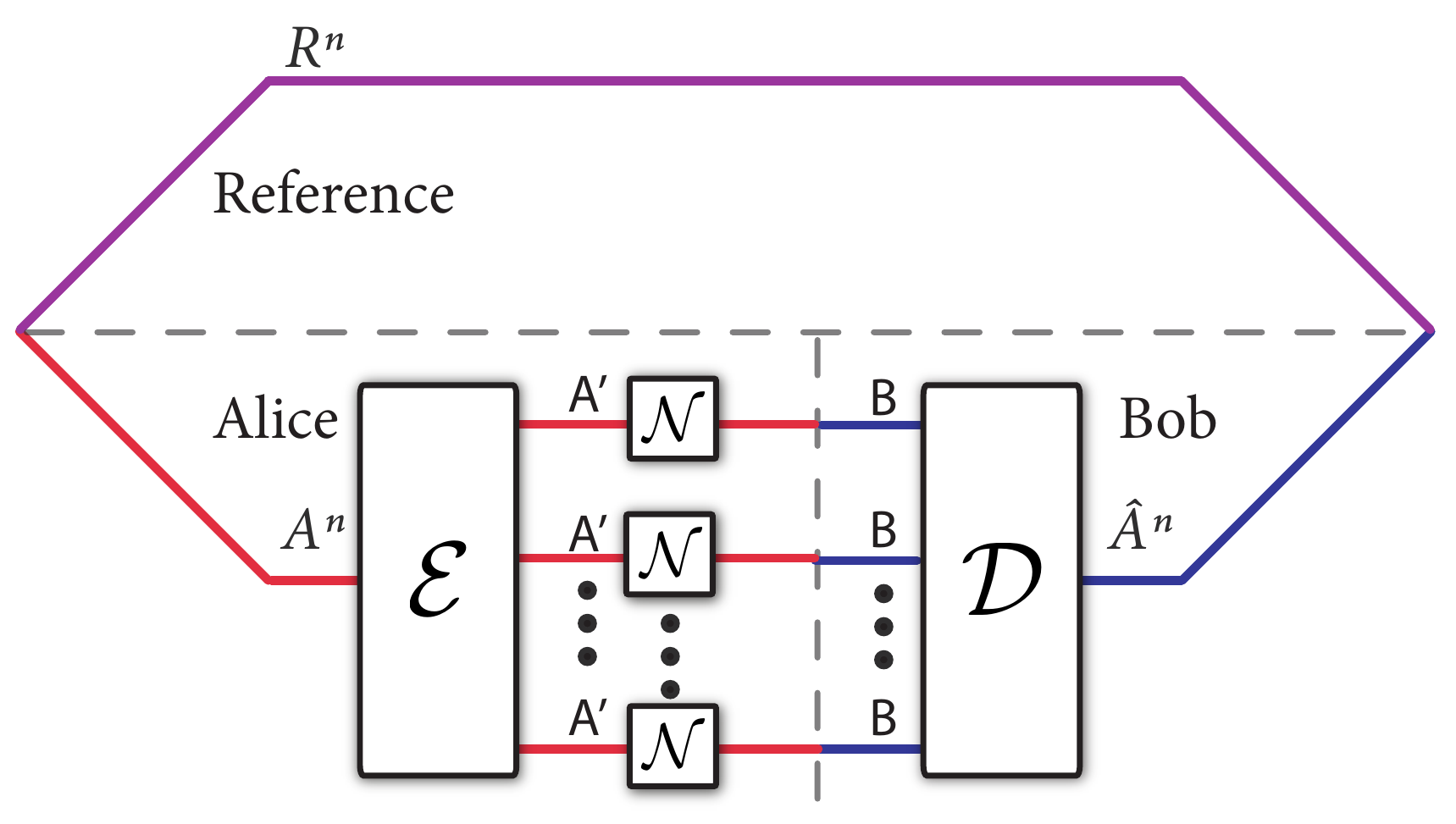}
\end{center}
\caption{The most general protocol for transmitting a quantum information
source over a memoryless quantum channel.}%
\label{fig:quantum-SC}%
\end{figure}

\begin{theorem}
\label{thm:quantum-SC}The following condition is necessary and sufficient for
transmitting the output of a memoryless quantum information source,
characterized by a density matrix $\rho_{A}$, over a quantum channel
$\mathcal{N}\equiv\mathcal{N}^{A^{\prime}\rightarrow B}$ with additive
coherent information:%
\begin{equation}
H\left(  A\right)  _{\rho}\leq Q\left(  \mathcal{N}\right)  ,
\label{eq:quantum-source-channel}%
\end{equation}
where $H\left(  A\right)  _{\rho}$ is the entropy of the quantum information
source, and $Q\left(  \mathcal{N}\right)  $ is the coherent information of the
channel~$\mathcal{N}$:%
\begin{align*}
Q\left(  \mathcal{N}\right)   &  \equiv\max_{|\phi_{AA^{\prime}}\rangle
}I\left(  A\rangle B\right)  _{\sigma},\\
\sigma_{AB}  &  \equiv\mathcal{N}^{A^{\prime}\rightarrow B}(\phi_{AA^{\prime}%
}).
\end{align*}

\end{theorem}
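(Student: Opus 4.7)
The plan is to treat the two directions separately, with sufficiency following from a two-stage concatenation of quantum protocols and necessity following from a chain of entropic inequalities in close analogy with the proof of Theorem~\ref{thm:HSW-source-channel}.

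For sufficiency, I would invoke the optimal protocols for the two subtasks independently. Given $H(A)_{\rho}\leq Q(\mathcal{N})$, Alice first applies Schumacher compression \cite{Schumacher:1995dg} to the source state $\rho^{\otimes n}$, producing an effective quantum message of rate $H(A)_{\rho}+\delta$ qubits per source symbol, with entanglement fidelity going to one. She then encodes this compressed register using a quantum error-correcting code achieving the coherent information of $\mathcal{N}$ (the LSD\ theorem \cite{PhysRevA.55.1613,capacity2002shor,ieee2005dev}), which reliably transmits $Q(\mathcal{N})-\delta$ qubits per channel use. By the additivity assumption, blocklengths can be chosen so that the two inner rates match, and the triangle-type combination of the two small entanglement-fidelity errors yields a composite map $\Lambda_{n}$ satisfying \eqref{fident} for any $\varepsilon>0$ and $n$ large enough.

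For necessity, I would start from a good code satisfying \eqref{eq:good-SC-quantum-code} and run the following chain, with $\sigma_{R^{n}\hat{A}^{n}}\equiv(\mathrm{id}_{R^{n}}\otimes\Lambda_{n})(\psi_{RA}^{\rho})^{\otimes n}$ and $\omega_{R^{n}B^{n}}\equiv(\mathrm{id}_{R^{n}}\otimes\mathcal{N}^{\otimes n}\circ\mathcal{E}_{n})(\psi_{RA}^{\rho})^{\otimes n}$:
\begin{align*}
nH(A)_{\rho} & = I(R^{n}\rangle A^{n})_{\psi^{\otimes n}}\\
& \leq I(R^{n}\rangle\hat{A}^{n})_{\sigma}+f(n,\varepsilon)\\
& \leq I(R^{n}\rangle B^{n})_{\omega}+f(n,\varepsilon)\\
& \leq Q(\mathcal{N}^{\otimes n})+f(n,\varepsilon)\\
& = nQ(\mathcal{N})+f(n,\varepsilon).
\end{align*}
The first equality uses that $\psi_{RA}^{\rho}$ is pure, so $H(R^{n})=H(A^{n})$ and $H(R^{n}A^{n})=0$. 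The first inequality is the main analytical step: it uses the Alicki--Fannes continuity bound for coherent information together with \eqref{eq:good-SC-quantum-code}, giving a correction term $f(n,\varepsilon)$ of the form $4n\varepsilon\log\dim\mathcal{H}_{A}+2 H_{2}(\varepsilon)$ (or similar). The second inequality is the quantum data-processing inequality applied to the decoder $\mathcal{D}_{n}$. The third inequality bounds the coherent information of the output state by the maximum coherent information over all pure-state inputs to $\mathcal{N}^{\otimes n}$, i.e.~$Q(\mathcal{N}^{\otimes n})$. The final equality uses the assumed additivity of $Q$ on $\mathcal{N}$. Dividing by $n$ and taking $\varepsilon\to 0$ then yields \eqref{eq:quantum-source-channel}.

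The main obstacle is the first inequality in the chain, namely establishing a continuity statement for coherent information that is strong enough to control $I(R^{n}\rangle A^{n})-I(R^{n}\rangle\hat{A}^{n})$ in terms of the trace-distance bound \eqref{eq:good-SC-quantum-code} while only paying a term that vanishes as $\varepsilon\to 0$ after normalizing by $n$. The Alicki--Fannes inequality supplies exactly such a bound, with the local Hilbert-space dimension $\dim\mathcal{H}_{A}$ entering linearly and the blocklength $n$ entering so that the correction is $O(n\varepsilon)$; this is harmless because the limits are taken as $n\to\infty$ first and then $\varepsilon\to 0$. Other steps (data processing and the definition of $Q(\mathcal{N})$) are standard, and the additivity hypothesis is what permits a single-letter statement; without it, one only obtains a regularized version analogous to Remark~\ref{rem:hol-reg}.
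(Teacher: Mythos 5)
Your overall strategy coincides with the paper's: sufficiency by concatenating Schumacher compression with an LSD quantum error-correcting code, and necessity by an entropic chain that passes from the source entropy to the coherent information of the output and then to $Q(\mathcal{N}^{\otimes n})=nQ(\mathcal{N})$. Your first inequality, via Alicki--Fannes continuity of $I(R^{n}\rangle\cdot)$ applied to (\ref{eq:good-SC-quantum-code}), is an acceptable substitute for the paper's use of the Barnum--Nielsen--Schumacher relation $H(R^{n})\leq I(R^{n}\rangle B^{n})+2+4(1-F_{e})\log|R^{n}|$; both yield an $O(n\varepsilon)$ correction that vanishes after dividing by $n$ and sending $\varepsilon\to0$. (Minor point: additivity plays no role in sufficiency --- the LSD theorem already achieves the single-letter $Q(\mathcal{N})$ --- it is only needed to de-regularize the converse.)

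There is, however, a genuine gap at your third inequality, $I(R^{n}\rangle B^{n})_{\omega}\leq Q(\mathcal{N}^{\otimes n})$. The quantity $Q(\mathcal{N}^{\otimes n})$ is a maximum over \emph{pure} bipartite inputs $|\phi_{AA'^{n}}\rangle$, whereas the state entering the channel in your chain is $(\mathrm{id}_{R^{n}}\otimes\mathcal{E}_{n})((\psi_{RA}^{\rho})^{\otimes n})$, which is mixed on $R^{n}A'^{n}$ for a general CPTP encoder. Coherent information with a mixed reference--input correlation is not manifestly dominated by the pure-input maximum (adding or removing a purifying system on the reference side changes $I(\cdot\rangle B)$ by $-H(\,\cdot\,|RB)$, which has no definite sign), so the step does not follow from the definition of $Q$ alone. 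The paper closes this by simulating $\mathcal{E}_{n}$ as an isometry followed by a von Neumann measurement of its environment, recording the outcome in a classical register $M$; then $I(R^{n}\rangle B^{n})\leq I(R^{n}\rangle B^{n}M)=\sum_{m}p(m)I(R^{n}\rangle B^{n})_{\rho_{m}}$ by data processing, and each conditional state $\rho_{m}$ arises from a pure input to $\mathcal{N}^{\otimes n}$, whence each term is at most $Q(\mathcal{N}^{\otimes n})$. You need this (or an equivalent reduction to pure inputs) to make the chain go through; with it inserted, your argument matches the paper's proof.
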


\begin{proof}
Sufficiency of (\ref{eq:quantum-source-channel}) follows from Schumacher
compression and the direct part of the quantum capacity
theorem~\cite{PhysRevA.55.1613,capacity2002shor,ieee2005dev}. Specifically,
the sender compresses the source down to a space of dimension $\approx
2^{nH\left(  R\right)  }$ with the Schumacher compression protocol. She then
encodes this subspace with a quantum error correction code for the channel
$\mathcal{N}$. The condition in (\ref{eq:quantum-source-channel}) guarantees
that we can apply the direct part of the quantum capacity theorem, and
combined with achievability of Schumacher compression, the receiver can
recover the quantum information source with asymptotically small error in the
limit of many copies of the source and many uses of the quantum channel.

{Fix $\varepsilon>0$ and note that $H(A)_{\rho}=H(R)_{\psi}$ since $\psi
_{RA}^{\rho}$ is a pure state. Then the necessity of
(\ref{eq:quantum-source-channel}) follows from the chain of inequalities given
below. Note that the subscripts denoting the states have been omitted for
simplicity:}
\begin{align}
nH\left(  A\right)   &  =nH\left(  R\right)  \nonumber\\
&  =H\left(  R^{n}\right)  \nonumber\\
&  \leq I\left(  R^{n}\rangle B^{n}\right)  +2+4\left(  1-F_{e}\right)
\log\left\vert R^{n}\right\vert \nonumber\\
&  \leq I\left(  R^{n}\rangle B^{n}M\right)  +2+4\varepsilon n\log\left\vert
R\right\vert \nonumber\\
&  =\sum_{m}p\left(  m\right)  I\left(  R^{n}\rangle B^{n}\right)  _{\rho_{m}%
}+2+4\varepsilon n\log\left\vert R\right\vert \nonumber\\
&  \leq Q\left(  \mathcal{N}^{\otimes n}\right)  +2+4\varepsilon
n\log\left\vert R\right\vert \nonumber\\
&  =nQ\left(  \mathcal{N}\right)  +2+4\varepsilon n\log\left\vert R\right\vert
.\label{eq:quantum-SC-devleopment}%
\end{align}
The first equality follows from the assumption that the initial state
$\left\vert \psi_{RA}^{\rho}\right\rangle ^{\otimes n}$ is a tensor power
state. The first inequality follows from (7.34)\ of Ref.~\cite{BNS98} a
fundamental relation between the input entropy, the coherent information of a
channel, and the entanglement fidelity of any quantum error correction code.

Now, the encoding that Alice employs may in general be some CPTP\ encoding map
(and not an isometry). However, Alice can simulate any such CPTP map by first
performing an isometry and then a von Neumann measurement on the system not
fed into the channel (the environment of the simulated CPTP). Let $M$ denote
the classical system resulting from measuring the environment of the simulated
CPTP\ map. We can write the state after the channel acts as a
classical-quantum state of the following form:%
\[
\sum_{m}p\left(  m\right)  \left\vert m\right\rangle \left\langle m\right\vert
_{M}\otimes\rho_{R^{n}B^{n}}^{m}.
\]
Then the second inequality follows from quantum data processing inequality and
(\ref{fident}). The second equality follows because%
\begin{align*}
I\left(  R^{n}\rangle B^{n}M\right)   &  =I\left(  R^{n}\rangle B^{n}%
|M\right)  \\
&  =\sum_{m}p\left(  m\right)  I\left(  R^{n}\rangle B^{n}\right)  _{\rho_{m}%
},
\end{align*}
whenever the conditioning system is classical~\cite{W11}. The third inequality
follows because the channel's coherent information is never smaller than any
individual $I\left(  R^{n}\rangle B^{n}\right)  _{\rho_{m}}$ (and thus never
smaller than the average). The final inequality follows from the assumption
that the channel has additive coherent information (this holds for degradable
quantum channels~\cite{DS03} and is suspected to hold for two-Pauli
channels~\cite{smith:030501}). Thus, any protocol that reliably transmits the
quantum information source should satisfy the following inequality%
\[
H\left(  R\right)  \leq Q\left(  \mathcal{N}\right)  +\left(  2/n+4\varepsilon
\log\left\vert R\right\vert \right)  ,
\]
which converges to (\ref{eq:quantum-source-channel}) as $n\rightarrow\infty$
and $\varepsilon\rightarrow0$.
\end{proof}

\begin{remark}
A similar comment as in Remark~\ref{rem:hol-reg} holds whenever it is not
known that the channel has additive coherent information.
\end{remark}

\subsection{Entanglement-assisted quantum source-channel separation theorem}

Our final source-channel separation theorem applies to the scenario where
Alice and Bob {have} unlimited {prior shared} entanglement. The {statement} of
this theorem is that the entropy of the quantum information source being less
than the entanglement-assisted quantum capacity of the channel
\cite{ieee2002bennett,DHW08,W11}\ is both a necessary and sufficient condition
for the faithful transmission of the source over an entanglement-assisted
quantum channel. This theorem is the most powerful of any of the above because
the formulas involved are all single-letter, for any memoryless source and
channel.\begin{figure}[ptb]
\begin{center}
\includegraphics[
natheight=4.580000in,
natwidth=6.706600in,
height=2.3583in,
width=3.4411in
]{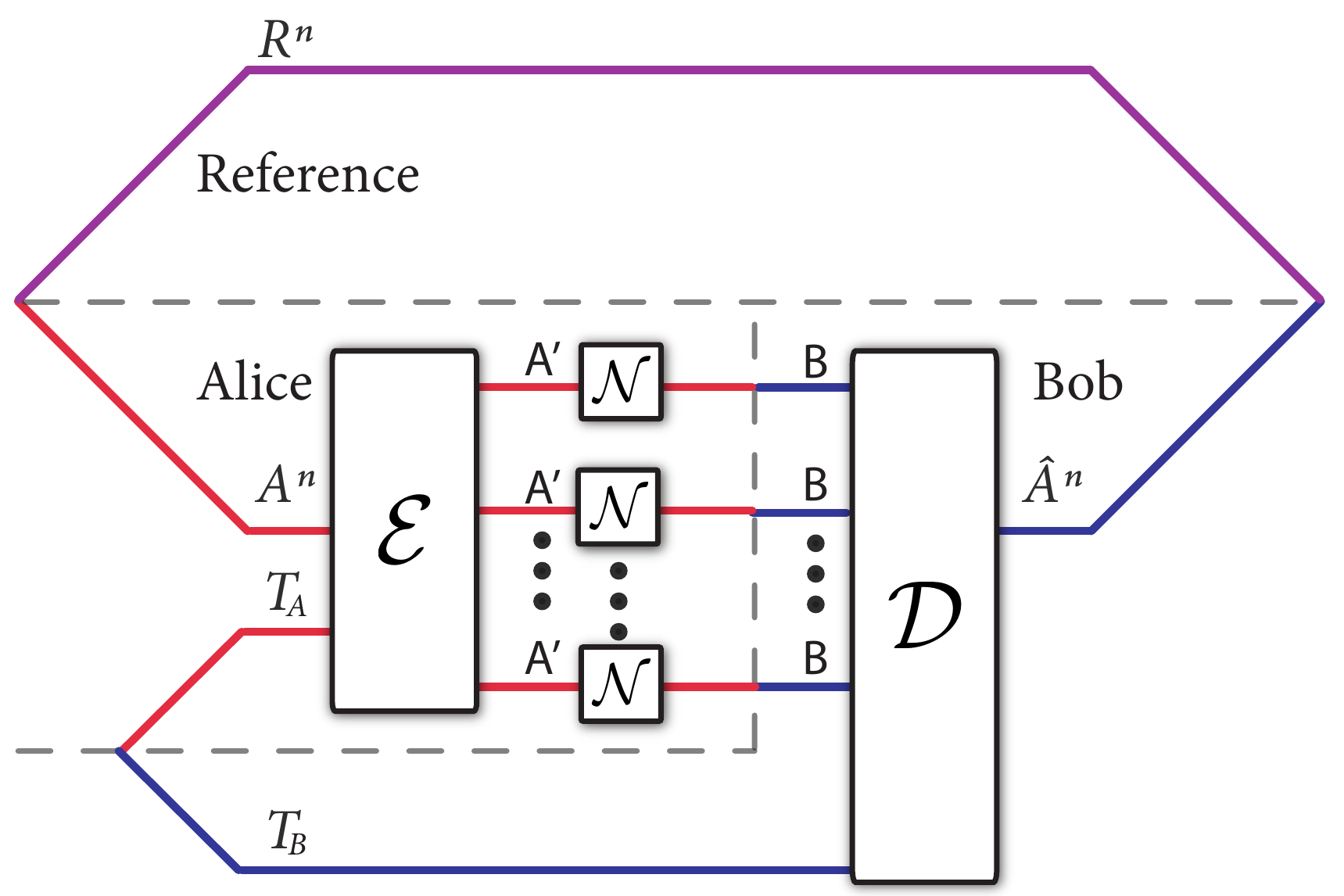}
\end{center}
\caption{The most general protocol for transmitting a quantum information
source over a memoryless, entanglement-assisted quantum channel.}%
\label{fig:EA-quantum-SC}%
\end{figure}

Figure~\ref{fig:EA-quantum-SC}\ depicts the scenario to which this last
theorem applies. The situation is nearly identical to that of the previous
section, with the exception that Alice and Bob have unlimited {prior shared}
entanglement. Alice begins by performing some CPTP\ encoding map
$\mathcal{E}_{n}\equiv\mathcal{E}^{A^{n}T_{A}\rightarrow A^{\prime n}}$ on the
systems $A^{n}$ from the quantum information source and on her share $T_{A}$
of the entanglement, producing some output systems $A^{\prime n}$ which can
serve as input to many uses of a quantum channel $\mathcal{N}^{A^{\prime
}\rightarrow B}$. Alice then transmits the $A^{\prime n}$ systems over the
channels, leading to some output systems $B^{n}$ for Bob. Bob then acts on
these systems and his share $T_{B}$ of the entanglement with some decoding map
$\mathcal{D}_{n} \equiv\mathcal{D}^{B^{n}T_{B}\rightarrow\hat{A}^{n}}$. If the
protocol is any good for transmitting the quantum information source, then the
following condition should hold for any $\varepsilon>0$ and sufficiently large
$n$:%
\begin{equation}
\left\Vert \left(  \psi_{RA}^{\rho}\right)  ^{\otimes n}-\mathcal{D}%
_{n}(\mathcal{N}^{\otimes n}(\mathcal{E}_{n}(\left(  \psi_{RA}^{\rho}\right)
^{\otimes n}\otimes\Phi^{T_{A}T_{B}})))\right\Vert _{1}\leq\varepsilon,
\label{eq:EA-SC-good-code}%
\end{equation}
where $\Phi^{T_{A}T_{B}}$ is the entangled state that they share before
communication begins (it does not necessarily need to be maximally entangled).
This leads to our final source-channel separation theorem:

\begin{theorem}
The following condition is necessary and sufficient for transmitting the
output of a memoryless quantum information source, characterized by a density
matrix $\rho_{A}$, over any entanglement-assisted quantum channel
$\mathcal{N}\equiv\mathcal{N}^{A^{\prime}\rightarrow B}$:%
\begin{equation}
H\left(  A\right)  _{\rho}\leq\frac{1}{2}I\left(  \mathcal{N}\right)  ,
\label{eq:EAQ-SC}%
\end{equation}
where $H\left(  A\right)  _{\rho}$ is the entropy of the quantum information
source, and
\begin{align*}
I\left(  \mathcal{N}\right)   &  \equiv\max_{|\varphi_{AA^{\prime}}\rangle
}I\left(  A;B\right)  _{\sigma},\\
\sigma_{AB}  &  \equiv\mathcal{N}^{A^{\prime}\rightarrow B}(\varphi
_{AA^{\prime}}).
\end{align*}

\end{theorem}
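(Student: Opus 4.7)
My plan is to handle sufficiency by an explicit two-stage protocol and the converse by a chain of inequalities that ends in the additivity of the channel mutual information.

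For sufficiency of (\ref{eq:EAQ-SC}), I would have Alice first Schumacher-compress $n$ copies of the source down to roughly $n H(A)_\rho$ qubits, and then encode those qubits using an entanglement-assisted quantum capacity code for $\mathcal{N}$. Such a code reliably transmits $\frac{1}{2} I(\mathcal{N})$ qubits per channel use~\cite{ieee2002bennett,DHW08,W11}, so whenever $H(A)_\rho \le \frac{1}{2} I(\mathcal{N})$ the budget of $n$ channel uses suffices; Bob inverts the channel code and Schumacher-decompresses to reconstruct $(\psi_{RA}^\rho)^{\otimes n}$ with vanishing trace-distance error.

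For the converse, I would fix $\varepsilon > 0$, assume (\ref{eq:EA-SC-good-code}), and assemble the chain
\[
2 n H(A)_\rho \le I(R^n; \hat{A}^n) + \delta \le I(R^n; B^n T_B) + \delta \le I(R^n T_B; B^n) + \delta \le n I(\mathcal{N}) + \delta,
\]
where $\delta = \delta(n,\varepsilon)$ is to satisfy $\delta/n \to 0$. The first inequality rests on the identity $I(R^n; A^n) = 2 n H(A)_\rho$, which holds because $\psi_{RA}^\rho$ is pure, together with an Alicki--Fannes-type continuity estimate applied to (\ref{eq:EA-SC-good-code}) to transfer this value to the actual output state. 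The second step is the quantum data processing inequality for Bob's decoder $\mathcal{D}^{B^n T_B \to \hat{A}^n}$. The third step is a chain-rule manipulation: since Alice's encoder never touches $R^n$ or $T_B$, the marginal on $R^n T_B$ remains the product $\rho_R^{\otimes n} \otimes \rho_{T_B}$, whence $I(R^n; T_B) = 0$; expanding both mutual informations by the chain rule then yields $I(R^n; B^n T_B) \le I(R^n T_B; B^n)$. The final step treats $R^n T_B$ as the reference for the effective input $A'^n$ fed into $\mathcal{N}^{\otimes n}$ and invokes the additivity $I(\mathcal{N}^{\otimes n}) = n I(\mathcal{N})$, which is exactly the converse bound underlying the entanglement-assisted classical capacity theorem of~\cite{ieee2002bennett}. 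Dividing by $2n$ and passing to the limits $n \to \infty$ and then $\varepsilon \to 0$ produces (\ref{eq:EAQ-SC}).

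The main obstacle I anticipate is controlling $\delta$ in the first inequality: the reference dimension $|R^n| = |A|^n$ grows exponentially, so a naive Fannes estimate would contribute $\Theta(n)$ and kill the rate. Instead I would use the Alicki--Fannes bound on conditional entropy, noting that the marginal $\rho_{R^n}$ is unchanged by any operation of Alice or the channel, so the only fluctuation is in $H(R^n | \hat{A}^n)$, for which Alicki--Fannes delivers $\delta = O(\varepsilon n \log|A|) + O(1)$. Taking $\varepsilon \to 0$ after $n \to \infty$ then makes $\delta/n$ vanish, and every other ingredient in the chain is either a standard entropic identity or an already-established additivity result, so these should present no difficulty.
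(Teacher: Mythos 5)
Your proposal follows essentially the same route as the paper: sufficiency by Schumacher compression concatenated with an entanglement-assisted quantum capacity code, and a converse running $2nH(A)_\rho \leq I(R^n;B^nT_B) \leq I(R^nT_B;B^n) \leq nI(\mathcal{N})$. Your use of an Alicki--Fannes estimate on $H(R^n|\hat{A}^n)$ in place of the paper's appeal to the quantum Fano-type inequality of Ref.~\cite{BNS98} (relating input entropy, coherent information, and entanglement fidelity) is a cosmetic difference; both yield an error term of order $\varepsilon n \log|R|$, which is harmless in the order of limits you specify.

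The one step you pass over too quickly is the final inequality $I(R^nT_B;B^n)\leq I(\mathcal{N}^{\otimes n})$. The quantity $I(\mathcal{N})$ is defined by a maximization over \emph{pure} inputs $|\varphi_{AA'}\rangle$, but Alice's encoder is a general CPTP map, so the joint state on $R^nT_B A'^n$ entering the channel is generally mixed, and $R^nT_B$ does not purify $A'^n$; "treating $R^nT_B$ as the reference" is therefore not yet an instance of the definition. The paper closes this by simulating the encoder with an isometry, measuring its environment to obtain a classical register $M$, using data processing to pass to $I(R^nT_BM;B^n)$, and invoking the lemma (Refs.~\cite{WH10b,W11}) that the maximum of $I(AX;B^n)$ over classical--quantum input ensembles equals $I(\mathcal{N}^{\otimes n})$. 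Alternatively, one line suffices: purify the post-encoding state with an auxiliary system $E$ and bound $I(R^nT_B;B^n)\leq I(R^nT_BE;B^n)\leq I(\mathcal{N}^{\otimes n})$ by data processing, since $R^nT_BE$ does purify the channel input. Either way, the argument needs this insertion before the additivity step can be invoked.
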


\begin{proof}
Sufficiency of (\ref{eq:EAQ-SC}) follows from reasoning similar to that in the
proof of Theorem~\ref{thm:quantum-SC}. We just exploit Schumacher compression
and the entanglement-assisted quantum capacity theorem
\cite{ieee2002bennett,DHW08,W11}.

{Fix $\varepsilon>0$ and note that $H(A)_{\rho}=H(R)_{\psi}$ since $\psi
_{RA}^{\rho}$ is a pure state.} Then necessity of (\ref{eq:EAQ-SC}) follows
from the following chain of inequalities. {Once again, the subscripts denoting
the states have been omitted for simplicity}:%
\begin{align}
2nH\left(  R\right)   &  =2H\left(  R^{n}\right)  \nonumber\\
&  \leq H\left(  R^{n}\right)  +I\left(  R^{n}\rangle B^{n}T_{B}\right)
\nonumber\\
&  \ \ \ \ \ \ \ \ \ \ \ \ +2+4\left(  1-F_{e}\right)  \log\left\vert
R^{n}\right\vert \nonumber\\
&  \leq I\left(  R^{n};B^{n}T_{B}\right)  +2+4n\varepsilon\log\left\vert
R\right\vert \nonumber\\
&  =I\left(  R^{n}T_{B};B^{n}\right)  +I\left(  R^{n};T_{B}\right)  -I\left(
T_{B};B^{n}\right)  \nonumber\\
&  \ \ \ \ \ \ \ \ \ \ \ \ \ +2+4n\varepsilon\log\left\vert R\right\vert
\nonumber\\
&  =I\left(  R^{n}T_{B};B^{n}\right)  -I\left(  T_{B};B^{n}\right)
+2+4n\varepsilon\log\left\vert R\right\vert \nonumber\\
&  \leq I\left(  R^{n}T_{B}M;B^{n}\right)  +2+4n\varepsilon\log\left\vert
R\right\vert \nonumber\\
&  \leq\max_{\rho_{XAA^{\prime n}}}I\left(  AX;B^{n}\right)  +2+4n\varepsilon
\log\left\vert R\right\vert \nonumber\\
&  =I\left(  \mathcal{N}^{\otimes n}\right)  +2+4n\varepsilon\log\left\vert
R\right\vert \nonumber\\
&  =nI\left(  \mathcal{N}\right)  +2+4n\varepsilon\log\left\vert R\right\vert
.\label{eq:EAQ-SC-chain}%
\end{align}
The first inequality follows by applying the same reasoning as the first
inequality in (\ref{eq:quantum-SC-devleopment}). The second inequality follows
by applying $H\left(  R^{n}\right)  +I\left(  R^{n}\rangle B^{n}T_{B}\right)
=I\left(  R^{n};B^{n}T_{B}\right)  $ and the fact that $1-F_{e}\leq
\varepsilon$ for a protocol satisfying (\ref{eq:EA-SC-good-code}). The second
inequality follows from a useful identity for quantum mutual information. The
third equality follows from the assumption that systems $R^{n}$ and $T_{B}$
begin in a product state. The third inequality follows because $I\left(
T_{B};B^{n}\right)  \geq0$. The fourth inequality follows from the reasoning,
similar to that used in the proof of Theorem \ref{thm:quantum-SC}, that Alice
simulates an isometry and measures the environment (also exploiting the
quantum data processing inequality). The next inequality follows because the
state on $R^{n}T_{B}MB^{n}$ is a state of the form%
\[
\sum_{x}p_{X}\left(  x\right)  \left\vert x\right\rangle \left\langle
x\right\vert _{X}\otimes\mathcal{N}^{A^{\prime n}\rightarrow B^{n}}%
(\rho_{AA^{\prime n}}^{x}),
\]
where we identify $R^{n}T_{B}$ with $A$, and $M$ with $X$. Thus, the
information quantity $I\left(  R^{n}T_{B}M;B^{n}\right)  $ can never be larger
than the maximum over all such states of that form. The second-to-last
equality was proved in Refs.~\cite{WH10b,W11}. The final equality follows from
additivity of the channel's quantum mutual information
\cite{PhysRevA.56.3470,ieee2002bennett,W11}. Thus, any entanglement-assisted
protocol that reliably transmits the quantum information source should satisfy
the following inequality%
\[
H\left(  A\right)  _{\rho}\leq\frac{1}{2}I\left(  \mathcal{N}\right)  +\left(
1/n+2\varepsilon\log\left\vert R\right\vert \right)  ,
\]
which converges to (\ref{eq:EAQ-SC}) as $n\rightarrow\infty$ and
$\varepsilon\rightarrow0$.
\end{proof}

What if the condition $H\left(  A\right)  _{\rho}>\frac{1}{2}I\left(
\mathcal{N}\right)  $ holds instead? We can prove a variant of the above
source-channel separation theorem that allows for the information source to be
reconstructed at the receiving end up to some distortion $D$. We obtain the
following theorem:

\begin{theorem}
\label{thm:EA-RD-SC}The following condition is necessary and sufficient for
transmitting the output of a memoryless quantum information source over an
entanglement-assisted quantum channel (up to some distortion $D$):%
\begin{equation}
R_{\text{eaq}}^{q}\left(  D\right)  \leq\frac{1}{2}I\left(  \mathcal{N}%
\right)  , \label{eq:EAQ-SC-RD}%
\end{equation}
where $R_{\text{eaq}}^{q}\left(  D\right)  $ is defined (\ref{up3}).
\end{theorem}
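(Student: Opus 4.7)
The plan is to establish both directions by leveraging theorems already proved in the paper. For sufficiency, I would concatenate the entanglement-assisted rate-distortion protocol of Theorem~\ref{thm3} with the entanglement-assisted quantum capacity protocol \cite{ieee2002bennett,DHW08,W11}. For necessity, I would combine the converse reasoning of Theorem~\ref{thm3} with the chain of inequalities in (\ref{eq:EAQ-SC-chain}), using $I(R^n;\hat{A}^n)$ as the bridging quantity between the rate-distortion lower bound and the EA-capacity upper bound.

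For sufficiency, assume $R_{\text{eaq}}^{q}(D)\leq\tfrac{1}{2}I(\mathcal{N})$. For any $\varepsilon>0$, Alice first applies the EA rate-distortion code of Theorem~\ref{thm3} to $n$ copies of the source, producing $\approx n(R_{\text{eaq}}^{q}(D)+\varepsilon)$ noiseless qubits of output with average distortion at most $D+\varepsilon$. She then encodes these qubits using an EA quantum channel code at rate $\tfrac{1}{2}I(\mathcal{N})-\varepsilon$ per channel use, which is feasible by the hypothesis. The EA channel code transmits the compressed qubits with trace distance error vanishing in $n$; by an argument analogous to Lemma~\ref{lemeps} (using the relation between trace distance and entanglement fidelity), the resulting composite map differs from the pure rate-distortion map only by a vanishing amount in average distortion, so the combined distortion is at most $D+O(\varepsilon)$.

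For the converse, assume a protocol exists with overall map $\Lambda_{n}\equiv\mathcal{D}_{n}\circ\mathcal{N}^{\otimes n}\circ\mathcal{E}_{n}$ (entanglement $\Phi^{T_{A}T_{B}}$ implicit) satisfying $\bar{d}(\rho,\Lambda_{n})\leq D$. Let $\hat{A}^{n}$ be Bob's output and let $d_{i}\equiv d(\rho,\Lambda_{n}^{(i)})$ denote the marginal distortions, so that $\tfrac{1}{n}\sum_{i}d_{i}\leq D$. I would first lower-bound $I(R^{n};\hat{A}^{n})$ exactly as in the second half of Theorem~\ref{thm3}'s converse: superadditivity of quantum mutual information (Lemma~\ref{lem:superadd-MI}) gives $I(R^{n};\hat{A}^{n})\geq\sum_{i}I(R_{i};\hat{A}_{i})\geq 2\sum_{i}R_{\text{eaq}}^{q}(d_{i})$ by definition of $R_{\text{eaq}}^{q}$, and then convexity plus monotonicity (an analog of Lemma~\ref{lem:convex-EAC-RD} for $R_{\text{eaq}}^{q}$) yields $I(R^{n};\hat{A}^{n})\geq 2nR_{\text{eaq}}^{q}(D)$.

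Next I would upper-bound the same quantity by $nI(\mathcal{N})$ plus small terms, by running essentially the chain (\ref{eq:EAQ-SC-chain}) from the previous theorem, but starting one step earlier. Specifically, data processing through Bob's decoder gives $I(R^{n};\hat{A}^{n})\leq I(R^{n};B^{n}T_{B})$; then the identity $I(R^{n};B^{n}T_{B})=I(R^{n}T_{B};B^{n})-I(T_{B};B^{n})+I(R^{n};T_{B})$ together with $I(R^{n};T_{B})=0$ and $I(T_{B};B^{n})\geq 0$ reduces matters to $I(R^{n}T_{B};B^{n})$. Simulating Alice's CPTP encoder by an isometry followed by measurement of a classical register $M$, applying data processing once more, and using that the resulting state is classical-quantum of the form appearing in the definition of $I(\mathcal{N}^{\otimes n})$, bounds this by $I(\mathcal{N}^{\otimes n})=nI(\mathcal{N})$ via additivity of the channel's quantum mutual information \cite{PhysRevA.56.3470,ieee2002bennett}. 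Combining the two bounds gives $2nR_{\text{eaq}}^{q}(D)\leq nI(\mathcal{N})+o(n)$, and taking $n\to\infty$ with $\varepsilon\to 0$ yields (\ref{eq:EAQ-SC-RD}). The main subtlety will be verifying that $R_{\text{eaq}}^{q}(D)$ inherits convexity and monotonicity in $D$ from the analogous properties of $R_{\text{eac}}^{q}(D)$, and carefully tracking how the average distortion bound $\frac{1}{n}\sum_{i}d_{i}\leq D$ (rather than a per-letter bound) interacts with the superadditivity-plus-convexity step — but this is essentially the same bookkeeping already executed in Theorem~\ref{thm3}'s converse.
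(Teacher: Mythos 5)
Your proposal is correct and follows essentially the same route as the paper's own proof: sufficiency by concatenating the Theorem~\ref{thm3} rate-distortion code with an entanglement-assisted channel code, and necessity by sandwiching $I(R^{n};\hat{A}^{n})$ between $2nR_{\text{eaq}}^{q}(D)$ (via the argument of (\ref{eq:EAC-RD-2nd-block})) and $nI(\mathcal{N})$ (via data processing and the last steps of (\ref{eq:EAQ-SC-chain})). The only difference is that you spell out the superadditivity-plus-convexity bookkeeping that the paper handles by reference.
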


\begin{proof}
Sufficiency of (\ref{eq:EAQ-SC-RD}) follows from the entanglement-assisted
rate distortion protocol from Theorem~\ref{thm3} and the entanglement-assisted
quantum capacity theorem \cite{ieee2002bennett,DHW08}. That is, the sender
compresses the information source down to a space of size $2^{nR_{\text{eaq}%
}^{q}\left(  D\right)  }$ and then uses an entanglement-assisted quantum code
to transmit any state in this subspace. The reconstructed state at the
receiving end obeys the distortion constraint. 

Necessity of (\ref{eq:EAQ-SC-RD}) follows from the fact that%
\begin{equation}
nR_{\text{eaq}}^{q}\left(  D\right)  \leq\frac{1}{2}I(R^{n};\hat{A}%
^{n}),\label{eq:EAQ-RD-SC-first-steps}%
\end{equation}
by applying the quantum data processing inequality to get $I(R^{n};\hat{A}%
^{n})\leq I(R^{n};B^{n}T_{B})$, and finally by applying the last seven steps
in the chain of inequalities in (\ref{eq:EAQ-SC-chain}). A proof of
(\ref{eq:EAQ-RD-SC-first-steps}) is available in (\ref{eq:EAC-RD-2nd-block})
of the proof of Theorem~\ref{thm2}.
\end{proof}

\section{Conclusion}

\label{sec:concl}We have proved several quantum rate-distortion theorems and
quantum source-channel separation theorems. All of our quantum rate-distortion
protocols employ the quantum reverse Shannon
theorems~\cite{ieee2002bennett,ADHW06FQSW,D06,BDHSW09,BCR09}. This strategy
works out well whenever unlimited entanglement is available, but it clearly
leads to undesirable regularized formulas in the unassisted setting. Our
quantum source-channel separation theorems demonstrate in many cases that a
two-stage compression-channel-coding strategy works best for memoryless
sources and for quantum channels with additive capacity measures. Again, our
most satisfying {result is} in the entanglement-assisted setting, where the
pleasing result is that the entanglement-assisted rate distortion function
being less than the entanglement-assisted quantum capacity is both necessary
and sufficient for transmission of a source over a channel up to some distortion.

The most important open question going forward from here is to determine
better protocols for quantum rate distortion that do not rely on the reverse
Shannon theorems. The differing goals of a reverse Shannon theorem and a rate
distortion protocol are what lead to complications with regularization in
Theorem~\ref{thm1}.

Another productive avenue could be to explore scenarios where the unassisted
quantum source-channel separation theorem does not apply. In the classical
case, it is known that certain sources and channels without a memoryless
structure can violate the source-channel separation theorem \cite{VVS95}, and
similar ideas {would possibly} demonstrate a violation for the quantum case.
Though, in the quantum case, it very well could be that certain memoryless
sources and channels could violate source-channel separation, but we would
need a better understanding of quantum capacity in the general case in order
to determine definitively whether this could be so.

Other {interesting} questions are as follows:\ Does the entanglement-assisted
quantum source-channel separation theorem apply if sender and receiver are
given unlimited access to a quantum feedback channel, given what we already
know about quantum feedback \cite{B04}? Can anything learned from
source-channel separation for classical broadcast or wiretap channels be
applied to figure out a more general characterization for quantum channels
that are not degradable?

The authors thank Jonathan Oppenheim and Andreas Winter for useful discussions,
Patrick Hayden for the suggestion to pursue a quantum source-channel
separation theorem, and the anonymous referees for helpful suggestions.
ND and MHH received funding from the European Community's
Seventh Framework Programme (FP7/2007-2013) under grant agreement number
213681. MMW acknowledges financial support from the MDEIE (Qu\'{e}bec)
PSR-SIIRI international collaboration grant and thanks the Centre for
Mathematical Sciences at the University of Cambridge for hosting him for a visit.

\appendix

\section{Supporting Lemmas}

\begin{lemma}
\label{lem:MI-convex}For a fixed state $\rho$, the quantum mutual information
is convex in the channel operation:%
\[
I\left(  A;B\right)  _{\omega}\leq\sum_{x}p\left(  x\right)  I\left(
A;B\right)  _{\omega_{x}},
\]
where%
\begin{align}
\omega_{AB}  &  :=({\text{id}}\otimes\mathcal{N}^{A^{\prime}\rightarrow
B})(\psi_{AA^{\prime}}^{\rho}),\nonumber\\
\omega_{AB}^{x}  &  :=({\text{id}}\otimes\mathcal{N}_{x}^{A^{\prime
}\rightarrow B})(\psi_{AA^{\prime}}^{\rho}),\nonumber\\
\mathcal{N}  &  \mathcal{\equiv}\sum_{x}p\left(  x\right)  \mathcal{N}_{x}.
\end{align}

\end{lemma}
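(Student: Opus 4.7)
The plan is to introduce a classical flag register $X$ that records which branch $\mathcal{N}_x$ of the mixture was applied, and then obtain the convexity statement from the data processing inequality for quantum mutual information upon discarding this register. This is a standard ``convex combination equals average after tracing out a flag'' trick, and it avoids any direct manipulation of entropies of mixtures.

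Concretely, I would define the CPTP map $\mathcal{M}^{A'\to BX}$ by
\[
\mathcal{M}(\sigma) = \sum_x p(x)\,\mathcal{N}_x(\sigma)\otimes |x\rangle\langle x|_X,
\]
and set $\tau_{ABX} := (\mathrm{id}_A \otimes \mathcal{M}^{A'\to BX})(\psi^{\rho}_{AA'})$. By construction
\[
\tau_{ABX} = \sum_x p(x)\,\omega^x_{AB}\otimes |x\rangle\langle x|_X,
\]
so the marginal on $AB$ is $\tau_{AB} = \omega_{AB}$, while the marginal on $A$ is $\rho_A$ for every value of $x$. The first key step is to compute $I(A;BX)_\tau$ using the chain rule: since $\tau_{AX} = \rho_A\otimes \tau_X$ we have $I(A;X)_\tau = 0$, and because $X$ is classical the conditional mutual information splits as $I(A;B|X)_\tau = \sum_x p(x)\,I(A;B)_{\omega_x}$. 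Combining these gives $I(A;BX)_\tau = \sum_x p(x)\,I(A;B)_{\omega_x}$.

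The second key step is to invoke the data processing inequality for the CPTP map $\mathrm{Tr}_X$ applied to the $BX$ system, which yields
\[
I(A;B)_\omega = I(A;B)_\tau \leq I(A;BX)_\tau = \sum_x p(x)\,I(A;B)_{\omega_x},
\]
exactly the claimed inequality. There is no real obstacle here: the only thing to be careful about is verifying that $I(A;X)_\tau = 0$ (which uses that the reduced state on $A$ does not depend on which channel branch acts on $A'$, since the branches act only on $A'$ and leave the purification's $A$ marginal untouched) and that the conditional mutual information on a classically-indexed state reduces to the claimed average. Both facts are routine properties of von Neumann entropy that are repeatedly used elsewhere in the paper.
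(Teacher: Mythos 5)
Your proof is correct and rests on the same mechanism as the paper's: the data-processing inequality applied to a classically flagged extension of the channel output. The paper packages this slightly differently---it first cancels the channel-independent term $H(\rho)$ and reduces the claim to convexity of the coherent information, which it identifies with $I\left(A\rangle B\right)\leq I\left(A\rangle BX\right)$---but since $I\left(A;BX\right)_{\tau}=H\left(A\right)_{\tau}+I\left(A\rangle BX\right)_{\tau}$ and $I\left(A;X\right)_{\tau}=0$ in your construction, the two arguments amount to the same inequality in different clothing.
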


\begin{proof}
It is possible to show that%
\begin{align*}
I\left(  A;B\right)  _{\omega}  &  =H\left(  \rho\right)  +H\left(
\mathcal{N}\left(  \rho\right)  \right)  -H\left(  \left(  I\otimes
\mathcal{N}\right)  \left(  \psi\right)  \right)  ,\\
I\left(  A;B\right)  _{\omega_{x}}  &  =H\left(  \rho\right)  +H\left(
\mathcal{N}_{x}\left(  \rho\right)  \right)  -H\left(  \left(  I\otimes
\mathcal{N}_{x}\right)  \left(  \psi\right)  \right)  ,
\end{align*}
and the desired inequality becomes%
\begin{multline*}
H\left(  \rho\right)  +H\left(  \mathcal{N}\left(  \rho\right)  \right)
-H\left(  \left(  I\otimes\mathcal{N}\right)  \left(  \psi\right)  \right) \\
\leq\sum_{x}p\left(  x\right)  \left[  H\left(  \rho\right)  +H\left(
\mathcal{N}_{x}\left(  \rho\right)  \right)  -H\left(  \left(  I\otimes
\mathcal{N}_{x}\right)  \left(  \psi\right)  \right)  \right]  .
\end{multline*}
This inequality is equivalent to%
\begin{multline*}
H\left(  \mathcal{N}\left(  \rho\right)  \right)  -H\left(  \left(
I\otimes\mathcal{N}\right)  \left(  \psi\right)  \right) \\
\leq\sum_{x}p\left(  x\right)  \left[  H\left(  \mathcal{N}_{x}\left(
\rho\right)  \right)  -H\left(  \left(  I\otimes\mathcal{N}_{x}\right)
\left(  \psi\right)  \right)  \right]  ,
\end{multline*}
which in turn is equivalent to convexity of coherent information, or
equivalently, the quantum data processing inequality for coherent information:%
\[
I\left(  A\rangle B\right)  \leq I\left(  A\rangle BX\right)  .
\]

\end{proof}

\begin{lemma}
\label{lem:convex-EAC-RD}The quantum rate-distortion function $R_{\text{eac}%
}^{q}\left(  D\right)  $ is non-increasing and convex:%
\[
D_{1}<D_{2}\Rightarrow R_{\text{eac}}^{q}\left(  D_{1}\right)  \geq
R_{\text{eac}}^{q}\left(  D_{2}\right)  ,
\]%
\begin{multline*}
R_{\text{eac}}^{q}\left(  \lambda D_{1}+\left(  1-\lambda\right)  D_{2}\right)
\\
\leq\lambda R_{\text{eac}}^{q}\left(  D_{1}\right)  +\left(  1-\lambda\right)
R_{\text{eac}}^{q}\left(  D_{2}\right)  ,
\end{multline*}
where $0\leq\lambda\leq1$.
\end{lemma}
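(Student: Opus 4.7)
The plan is to handle the two claims separately, using the minimization formula for $R_{\text{eac}}^{q}(D)$ from Theorem~\ref{thm2} together with Lemma~\ref{lem:MI-convex}.

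For the monotonicity claim, the argument is just set-theoretic. If $D_{1}<D_{2}$, then every CPTP map $\mathcal{N}$ with $d(\rho,\mathcal{N})\leq D_{1}$ also satisfies $d(\rho,\mathcal{N})\leq D_{2}$. Thus the feasible set in the minimization defining $R_{\text{eac}}^{q}(D_{1})$ is contained in the feasible set defining $R_{\text{eac}}^{q}(D_{2})$, so the minimum over the larger set can only be smaller. This immediately gives $R_{\text{eac}}^{q}(D_{1})\geq R_{\text{eac}}^{q}(D_{2})$.

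For convexity, I would pick optimizers $\mathcal{N}_{1}$ and $\mathcal{N}_{2}$ achieving the minima in $R_{\text{eac}}^{q}(D_{1})$ and $R_{\text{eac}}^{q}(D_{2})$, respectively, and form the mixture $\mathcal{N}_{\lambda}:=\lambda\mathcal{N}_{1}+(1-\lambda)\mathcal{N}_{2}$, which is itself a CPTP map. The distortion measure is affine in the channel, because the entanglement fidelity $F_{e}(\rho,\mathcal{N})=\langle\psi_{RA}^{\rho}|(\mathrm{id}_{R}\otimes\mathcal{N})(\psi_{RA}^{\rho})|\psi_{RA}^{\rho}\rangle$ is linear in $\mathcal{N}$; hence $d(\rho,\mathcal{N}_{\lambda})=\lambda d(\rho,\mathcal{N}_{1})+(1-\lambda)d(\rho,\mathcal{N}_{2})\leq\lambda D_{1}+(1-\lambda)D_{2}$. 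So $\mathcal{N}_{\lambda}$ is feasible for the minimization defining $R_{\text{eac}}^{q}(\lambda D_{1}+(1-\lambda)D_{2})$, yielding
\[
R_{\text{eac}}^{q}\bigl(\lambda D_{1}+(1-\lambda)D_{2}\bigr)\leq I(A;B)_{\omega_{\lambda}},
\]
where $\omega_{\lambda}=(\mathrm{id}\otimes\mathcal{N}_{\lambda})(\psi_{AA'}^{\rho})$.

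The remaining step, which I expect to be the main (though still short) obstacle, is bounding $I(A;B)_{\omega_{\lambda}}$ by the corresponding convex combination of mutual informations. This is precisely the content of Lemma~\ref{lem:MI-convex}: for a fixed input state, the mutual information is convex in the channel. Applying it gives $I(A;B)_{\omega_{\lambda}}\leq\lambda I(A;B)_{\omega_{1}}+(1-\lambda)I(A;B)_{\omega_{2}}=\lambda R_{\text{eac}}^{q}(D_{1})+(1-\lambda)R_{\text{eac}}^{q}(D_{2})$, which completes the proof. The only subtlety is that convexity of the mutual information in the channel is itself equivalent to the quantum data processing inequality for coherent information (as noted in the proof of Lemma~\ref{lem:MI-convex}), but since that lemma is already available, this step is immediate.
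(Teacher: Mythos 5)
Your proposal is correct and follows essentially the same route as the paper: monotonicity from nesting of the feasible sets, and convexity by mixing the two optimal maps, using linearity of the entanglement-fidelity distortion in the channel to establish feasibility and then invoking Lemma~\ref{lem:MI-convex} (convexity of mutual information in the channel) to bound the resulting mutual information. Your version is marginally more careful in noting that the mixed map's distortion is only bounded above by $\lambda D_{1}+(1-\lambda)D_{2}$ rather than equal to it, but this does not change the argument.
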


\begin{proof}
The proof is similar to Barnum's \cite{B00}, which in turn is similar to the
one from Ref.$~$\cite{book1991cover}. $R_{\text{eac}}^{q}\left(  D\right)  $
is non-increasing because the domain of minimization becomes larger after
increasing $D$, which implies that the rate-distortion function can only
become smaller. Let $\left(  R_{1},D_{1}\right)  $ and $\left(  R_{2}%
,D_{2}\right)  $ be two points on the information rate-distortion curve and
let $\mathcal{E}_{1}$ and $\mathcal{E}_{2}$ be the respective operations that
achieve the minimum in the definition of $R_{\text{eac}}^{q}$, respectively.
Consider the map $\mathcal{E}_{\lambda}\equiv\lambda\mathcal{E}_{1}+\left(
1-\lambda\right)  \mathcal{E}_{2}$. Under the assumption of a distortion
function that is linear in the operation (such as the entanglement fidelity),
it follows that the distortion caused by $\mathcal{E}_{\lambda}$ is
$D_{\lambda}=\lambda D_{1}+\left(  1-\lambda\right)  D_{2}$. We also have that
$R_{\text{eac}}^{q}\left(  D_{\lambda}\right)  $ is the minimum over all
operations that have distortion $D_{\lambda}$ so that $R_{\text{eac}}%
^{q}\left(  D_{\lambda}\right)  \leq I\left(  A;B\right)  _{\omega}$ where
$\omega^{AB}\equiv\mathcal{E}_{\lambda}^{A^{\prime}\rightarrow B}%
(\psi^{AA^{\prime}})$. Finally, we have that the mutual information is convex
in the operation (see Lemma~\ref{lem:MI-convex}) so that $I\left(  A;B\right)
_{\omega}\leq\lambda R_{\text{eac}}^{q}\left(  D_{1}\right)  +\left(
1-\lambda\right)  R_{\text{eac}}^{q}\left(  D_{2}\right)  $.
\end{proof}

\begin{lemma}
[Superadditivity of mutual information]\label{lem:superadd-MI}The mutual
information is superadditive in the sense that%
\[
I\left(  R_{1}R_{2};B_{1}B_{2}\right)  \geq I\left(  R_{1};B_{1}\right)
+I\left(  R_{2};B_{2}\right)  ,
\]
where the entropies are with respect to the following state:%
\[
\theta_{R_{1}R_{2}B_{1}B_{2}}\equiv\mathcal{N}^{A_{1}A_{2}\rightarrow
B_{1}B_{2}}(\phi_{R_{1}A_{1}}\otimes\varphi_{R_{2}A_{2}}),
\]
with $\mathcal{N}^{A_{1}A_{2}\rightarrow B_{1}B_{2}}$ some noisy channel, and
$\phi_{R_{1}A_{1}}$ and $\varphi_{R_{2}A_{2}}$ being pure, bipartite states.
\end{lemma}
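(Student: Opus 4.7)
The plan is to prove this via the chain rule for quantum mutual information together with the quantum data processing inequality, exploiting the crucial structural fact that the inputs are a product state.

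First, I would note that the reduced state on $R_1R_2$ is obtained by tracing out $A_1A_2$ from $\phi_{R_1A_1} \otimes \varphi_{R_2A_2}$ before the channel acts, since $\mathcal{N}^{A_1A_2\to B_1B_2}$ acts only on $A_1A_2$ and the reference systems $R_1R_2$ are untouched. Hence
\[
\theta_{R_1R_2} = \phi_{R_1}\otimes\varphi_{R_2},
\]
which immediately yields $I(R_1;R_2)_\theta = 0$.

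Next, I would apply the chain rule for quantum mutual information to decompose the left-hand side:
\[
I(R_1R_2;B_1B_2)_\theta = I(R_1;B_1B_2)_\theta + I(R_2;B_1B_2\,|\,R_1)_\theta.
\]
Using the chain rule in the other direction on the conditional term, together with $I(R_2;R_1)_\theta = 0$, gives
\[
I(R_2;B_1B_2\,|\,R_1)_\theta = I(R_2;R_1B_1B_2)_\theta - I(R_2;R_1)_\theta = I(R_2;R_1B_1B_2)_\theta.
\]

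Finally, I would bound each piece from below using the quantum data processing inequality (monotonicity of mutual information under partial trace, which is a CPTP map on one of the parties): discarding $B_2$ from the first term gives $I(R_1;B_1B_2)_\theta \geq I(R_1;B_1)_\theta$, and discarding $R_1B_1$ from the second term gives $I(R_2;R_1B_1B_2)_\theta \geq I(R_2;B_2)_\theta$. Adding these two bounds yields the desired inequality. There is no real obstacle here; the only subtlety is recognizing that the product structure of the input is exactly what kills the $I(R_2;R_1)$ term and lets the chain rule plus data processing close the argument.
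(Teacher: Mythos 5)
Your proof is correct. It does, however, take a different route from the paper's. The paper rewrites each mutual information as $I(R;B)=H(R)+I(R\rangle B)$, uses $H(R_{1}R_{2})=H(R_{1})+H(R_{2})$ (product references) to cancel the entropy terms, and thereby reduces the claim to superadditivity of coherent information, $I(R_{1}R_{2}\rangle B_{1}B_{2})\geq I(R_{1}\rangle B_{1})+I(R_{2}\rangle B_{2})$, which it then shows is an algebraic rearrangement of the single data-processing inequality $I(R_{1}B_{1};R_{2}B_{2})\geq I(B_{1};B_{2})$. You instead apply the chain rule directly to $I(R_{1}R_{2};B_{1}B_{2})$, use $I(R_{1};R_{2})_{\theta}=0$ to convert the conditional term into $I(R_{2};R_{1}B_{1}B_{2})$, and finish with two separate applications of monotonicity under partial trace. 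Both arguments exploit exactly the same structural fact (the references are in a product state) and both bottom out in data processing; yours is the more standard chain-rule decomposition familiar from the classical setting and avoids passing through coherent information, while the paper's packages everything into one clean inequality between mutual informations of the joint state. Neither has an advantage in generality; this is purely a difference in bookkeeping.
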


\begin{proof}
The inequality is equivalent to%
\begin{multline*}
H\left(  R_{1}R_{2}\right)  +I\left(  R_{1}R_{2}\rangle B_{1}B_{2}\right) \\
\geq H\left(  R_{1}\right)  +I\left(  R_{1}\rangle B_{1}\right)  +H\left(
R_{2}\right)  +I\left(  R_{2}\rangle B_{2}\right)  .
\end{multline*}
Observing that $H\left(  R_{1}R_{2}\right)  =H\left(  R_{1}\right)  +H\left(
R_{2}\right)  $ because the state on $R_{1}$ and $R_{2}$ is product, the
inequality is equivalent to%
\[
I\left(  R_{1}R_{2}\rangle B_{1}B_{2}\right)  \geq I\left(  R_{1}\rangle
B_{1}\right)  +I\left(  R_{2}\rangle B_{2}\right)  ,
\]
which is in turn equivalent to%
\[
I\left(  R_{1}B_{1};R_{2}B_{2}\right)  \geq I\left(  B_{1};B_{2}\right)  .
\]
This last inequality follows from the quantum data processing inequality.
\end{proof}

\bibliographystyle{plain}
\bibliography{Ref}

\end{document}